\newcommand{\condSum}[3]{\overset{#3}{\underset{\underset{#2}{#1}}{\sum}}}
\DeclareMathOperator{\sinc}{sinc}
\theoremstyle{plain}
\newtheorem{theorem}{Theorem}
\newtheorem{lemma}{Lemma}
\newtheorem{remark}{Remark}
\def\Htran{\mbox{\tiny $\mathrm{H}$}}
\def\Ttran{\mbox{\tiny $\mathrm{T}$}}
\def\CN{\mathcal{N}_{\mathbb{C}}} %Complex Gaussian
\newcommand{\vect}[1]{\mathbf{#1}}
\RenewDocumentCommand{\splitdfrac}{smm}{%
  \genfrac{}{}{0pt}{0}
    {\mathstrut#2\IfBooleanF{#1}{\quad\hfill}}% top
    {\IfBooleanF{#1}{\hfill\quad}\mathstrut#3}% bottom
}
\begin{document}

\title{Optimizing Reconfigurable Intelligent Surfaces for Short Transmissions: How Detailed Configurations can be Afforded?

}

\author{\IEEEauthorblockN{Anders Enqvist, \"Ozlem Tu\u{g}fe Demir, \emph{Member, IEEE}, Cicek Cavdar, \emph{Member, IEEE}, Emil Bj{\"o}rnson, \emph{Fellow, IEEE}}\vspace{-1cm}
\thanks{A. Enqvist, C. Cavdar, and E.~Bj\"ornson are with the Department of Computer Science, KTH Royal Institute of Technology, Kista, Sweden (\{enqv, cavdar, emilbjo\}@kth.se). \newline \indent \"O. T. Demir is with the  Department of Electrical and Electronics Engineering, TOBB University of Economics and Technology, Ankara, Turkey (ozlemtugfedemir@etu.edu.tr). \newline \indent This work was supported by the  FFL18-0277 grant from the Swedish Foundation for Strategic Research.\newline 
\indent The preliminary version \cite{enqvist2022optimizing} of this work was presented at ICC 2022 in Seoul, South Korea.  \newline \indent
© 2023 IEEE. Personal use of this material is permitted. Permission from IEEE must be 
obtained for all other uses, in any current or future media, including 
reprinting/republishing this material for advertising or promotional purposes, creating new 
collective works, for resale or redistribution to servers or lists, or reuse of any copyrighted 
component of this work in other works. DOI:10.1109/TWC.2023.3307605.}
% make the title area
}

\maketitle

%%% The abstract 

\begin{abstract}

% here is the new shorter abstract

This {\color{black}paper examines how to minimize the energy consumption of a user equipment (UE) when transmitting short data payloads. The receiving base station (BS) controls a reconfigurable intelligent surface (RIS), which requires additional pilot signals to be configured, to improve the channel conditions. The challenge is that the pilot signals increase the energy consumption and must be balanced against energy savings during data transmission.
We derive a formula for the energy consumption, including both pilot and data transmission powers and the effects of imperfect channel state information and discrete phase-shifts. 

To shorten the pilot length, we propose dividing the RIS into subarrays of multiple elements using the same reflection coefficient. The pilot power and subarray size are tuned to the payload length to minimize the energy consumption. Analytical results show that there exists a unique energy-minimizing solution. For small payloads and when the direct path loss between the BS and UE is weak compared to the path loss via the RIS, the solution is using subarrays with many elements and low pilot power and vice versa. The optimal percentage of energy spent on pilot signaling is in the order of $10$-$40$\%. 

}

\end{abstract}

\begin{IEEEkeywords}
Reconfigurable intelligent surface, energy efficiency, phase-shift optimization, subarrays, discrete phase-shifts, 6G.
\end{IEEEkeywords}

\section{Introduction}

Most wireless communication applications generate intermittent traffic\cite{asplund2020advanced}. When a user equipment (UE) should transmit a finite piece of data to a base station (BS), the payload size depends on which modulation and coding scheme is supported, which in turn is determined by the channel conditions. The channel used to be given by nature, but the advent of reconfigurable intelligent surfaces (RISs) makes it partially controllable \cite{8741198,Renzo2020b,Qingqing2021a,bjornson2022SPM}. An RIS consists of an array of sub-wavelength-sized elements that can be configured to reflect incident waves in desired ways \cite{8466374}, for example, as a beam towards the intended receiver. Recent works have described RIS experiments with over a thousand elements \cite{pei2021ris}. The individual wireless links might not be capacity-constrained in {\color{black}the sixth generation (6G) wireless technology}, thanks to enormous bandwidths, but rather limited by energy consumption.

\subsection{Prior Work}
% related works on RIS
A large RIS can improve the energy efficiency (EE) of the transmission and beats traditional relays when it is sufficiently large \cite{8741198, 8888223,9497709}. These prior works use the capacity-to-power ratio as the EE metric, targeting continuous transmission with full buffers and deterministic channels.
The RIS-aided ergodic capacity of fading channels was studied in \cite{9244106,9333612}, where it was shown that particular numbers of RIS elements maximize the capacity or EE metric (ergodic-capacity-to-power ratio), respectively. A key assumption was that the pilot length equals the number of elements (plus one due to the direct BS-UE channel), thus the channel coherence limits how many RIS elements are useful. However, there exist methods using shorter sequences as well \cite{bjornson2022maximum}. An RIS can help reduce the required uplink (UL) transmit power in a network if the RIS phase-shifts and combining vectors at the BS are optimized \cite{wu2022energy}. In \cite{You2021}, it was shown that a trade-off between spectral efficiency (SE) and EE exists in the UL for RIS-assisted {\color{black} multi-user multiple-input multiple-output (MU-MIMO)} systems and in \cite{Hua2021} how to utilize an RIS to reduce the overall energy consumption in a network of multiple UEs and BSs. In \cite{Ren2022}, a study of RIS enhancements to the reliability of short block length URLLC (ultra-reliable low-latency communication) system was conducted. {\color{black} This paper studies the impact of discrete phase-shifts on energy consumption. An analysis of how the data rate is degraded by discrete phase-shifts can also be found in \cite{zhang2020reconfigurable}. The work \cite{xiong20223d} shows how discrete phase-shifts exhibit very similar behaviors to continuous phase-shifts concerning an analysis of a Doppler power spectral density and autocorrelation function for a 3D (three-dimensional) non-stationary wideband channel model if there are two or more phase-shift bits. Implementations of a 1-bit RIS can be found in \cite{1BITRIS,kayraklik2022indoor}, 2-bit in \cite{dai2020reconfigurable}, and 3-bit in \cite{rains2022high}. 
Several papers have also explored the application of deep reinforcement learning (DRL) to optimization problems involving RIS-assisted systems. In \cite{saglam2022deep}, DRL was used for optimizing the sum-downlink-rate problem in an RIS-aided cellular system under imperfect channel state information (CSI) and hardware impairments. In \cite{9322372}, DRL was used to co-optimize the BS beamforming together with the beamforming of the RIS to minimize the required transmit power for the target user in the downlink. In \cite{zhu2022drl}, DRL was used in an RIS-assisted network to optimize the BS-RIS-UE allocation along with the beamforming.

A closely related technology is active RIS which is a relay-like version with integrated power amplifiers that can actively control the reflected signal amplitude in addition to the phase. There exists a trade-off between the network EE of active and passive RIS because of increased capabilities of boosting the signal-to-noise ratio (SNR) further at the additional cost of increased power consumption in the active hardware. Notable works include \cite{zhi2022active,fotock2023energy,zhang2022active}. Another possibility for improving the EE is through holographic MIMO technology which uses novel metamaterials to create a holographic radio environment to improve the signal \cite{huang2020holographic, 9826717}.

}

% novelty aspects in this paper
\subsection{Motivations}
 A practical obstacle is the additional pilot signaling overhead required to identify the optimal RIS configuration \cite{9311936}. {\color{black} This overhead is particularly problematic when the payload size is small in comparison to the pilot since shorter payloads make the pilot overhead relatively more important. The study of this scenario is a novel contribution of this paper.}
We take a new approach by considering the UL transmission of a finite-sized data payload, as is typically the case in practice \cite{asplund2020advanced}. A short-sized payload, which is an encoded data packet, increases the gap to the Shannon capacity for different code rates \cite{polyanskiy2010channel}, but we still need a certain SNR to achieve a desirably low error probability. We also emphasize the energy consumption that occurs as part of the signaling; how it plays an important role and why it should not be neglected.

% Research questions
The data rate and modulation format are predetermined in this paper, thus the maximum EE is achieved by minimizing the energy consumption at the UE with support by an RIS. This should be done for a given number of RIS elements and under the constraint that the SNR during data transmission is sufficient for the receiver to decode the data. 
The SNR boost that an RIS provides depends on its physical size (aperture gain) and its configuration (beamforming gain).
There are two main ways to limit the pilot signaling.
One is to turn off elements (i.e., turn them into absorbers) so that the number of active elements matches with the number of pilots \cite{9244106,9333612}. This incurs a large SNR loss since both the aperture and beamforming gains are reduced.
We instead consider grouping RIS elements into \emph{subarrays} that share the same reflection coefficient and pilot, similarly to how antenna arrays in 5G group physical antenna elements together as \emph{antenna ports} that use common reference signals. Grouping of RIS elements has been considered in \cite{zheng2019intelligent,you2020channel,demir2022channel} and experimental results can be found in \cite{kayraklik2022indoor}.
The benefit is that only the beamforming gain is reduced. We show that the pilot length can be made equal to the number of subarrays (plus one). 

\subsection{Contributions}
Different from previous works, in this paper we provide new insights through a novel problem of minimizing the \emph{energy consumption} rather than considering the SNR or SE as in the aforementioned papers. This paper aims to answer the following research questions:

\begin{itemize}
    \item Can the energy consumption be reduced by shortening the pilot sequences or reducing their power?
    \item Should the RIS be configured differently depending on the payload length?
    \item How are deployment factors, such as hardware impairments, imperfect CSI, and propagation conditions, impacting the answers to these questions?
\end{itemize}

 {\color{black}  To find the answers, we derive novel analytical closed-form expressions to formulate and solve an optimization problem where the number of subarrays and pilot power are the variables with the objective of minimizing energy consumption. How the solution depends on the payload length, the number of RIS elements, phase-shift quantization, and channel gains is showcased analytically and numerically to provide concrete answers to our research questions. }

This is an extended and revised version of our conference paper \cite{enqvist2022optimizing}. This work significantly expands on the results by including an analysis of the effects of varying pilot power, imperfect CSI, and phase-shift quantization, as well as a detailed analysis of payload length impact.

\section{System model}

We consider a system where a single-antenna UE transmits to a single-antenna BS. There is an RIS in the vicinity of the BS and UE  that can be utilized to improve the channel conditions. We assume that the BS configures and interacts with the RIS.
The question that arises is: when is it worth spending extra radio resources configuring the RIS, if the goal is to minimize the total energy consumption of the UE for UL transmission? 

\begin{figure}[t!]
	\centering 
	\begin{overpic}[width=.99\columnwidth,tics=10]{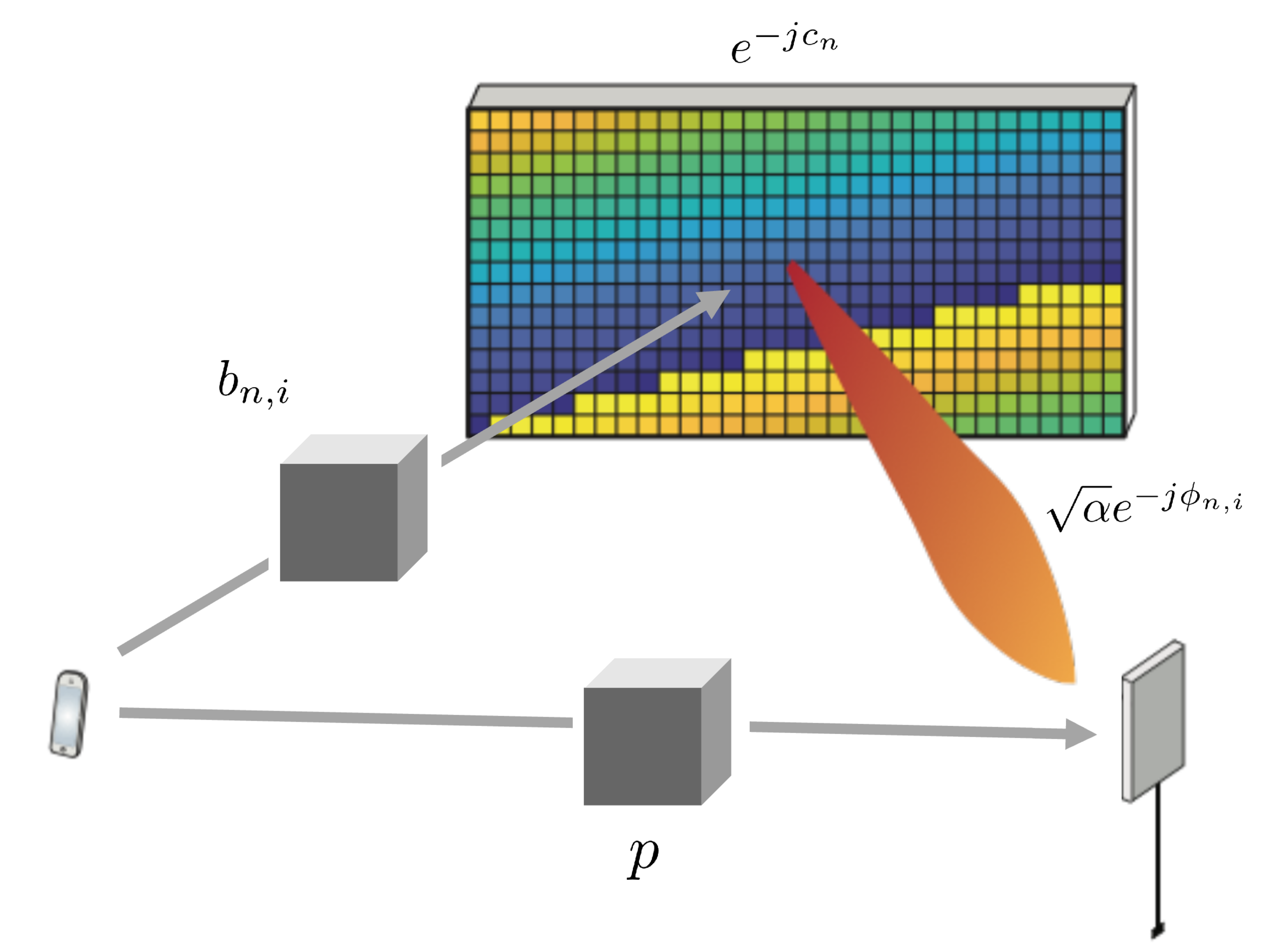}
		\put(2,10.5){\footnotesize UE}
		\put(93,10.5){\footnotesize  BS}
		%\put(-2,33.5){\footnotesize $M/N$}
		%\put(-2,30){\footnotesize  elements}
\end{overpic}  \vspace{-2mm}
	\caption{We consider a setup where a UE transmits to a BS through a direct path $p$ and a controllable RIS with $M$ elements divided into subarrays $n=1, \ldots, N$ in each of which the elements range from $i=1, \ldots, M/N$. The channel from the BS to the RIS is LoS whereas the other channels are NLoS, and modeled using Rayleigh fading. The colors represent the phase-shifts of the RIS subarrays.}  \vspace{-3mm}
\label{fig:system-model}
\end{figure}

\begin{figure}[t!]
	\centering 
	\begin{overpic}[width=.99\columnwidth,tics=10]{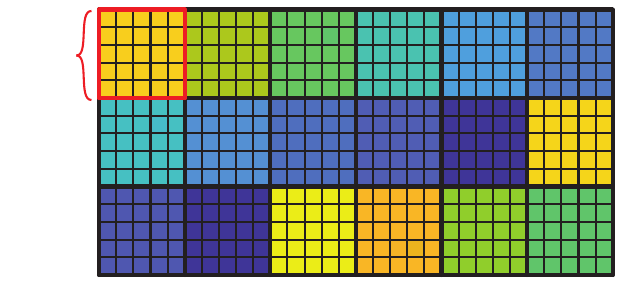}
		\put(-2,40.5){\footnotesize Subarray}
		\put(-2,37){\footnotesize  with}
		\put(-2,33.5){\footnotesize $M/N$}
		\put(-2,30){\footnotesize  elements}
\end{overpic}  \vspace{-2mm}
	\caption{An example setup with an RIS consisting of $M=450$ elements that is divided into $N=18$ subarrays, each with $M/N=25$ elements.}  \vspace{-3mm}
\label{fig:system-model2b}
\end{figure}

We model the communication system as follows. The UE transmits the data signal $x$ with power $P_\mathrm{data}$, i.e., $\mathbb{E}\{\vert x \vert^2\}=P_\mathrm{data}$. The received signal at the BS is
\begin{equation} \label{eq:received-signal}
    y=gx+w,
\end{equation}
where $w\sim \CN (0,\sigma^2)$ is the complex Gaussian thermal receiver noise  and $g$ is the end-to-end channel response, which depends on the RIS configuration and is modeled as follows.

We consider the environment in Fig.~\ref{fig:system-model}, where the BS and RIS are deployed to have a line-of-sight (LoS) channel between them, while all other channels are non-line-of-sight (NLoS) featuring Rayleigh fading. 
The direct path between the BS and UE is modeled as $p\sim \CN(0,\rho)$, where $\rho$ is the average channel gain.

There are also $M$ reflected paths via the $M$ RIS elements. We assume that the RIS is located in the far-field region of the BS and the UE, so that the average channel gain from each RIS element to the BS or the UE is the same.
We assume the RIS elements are divided into $N$ subarrays, each consisting of $M/N$ elements, as exemplified in Fig.~\ref{fig:system-model2b}. 
The RIS assigns the same phase-shift $c_n$ to all the elements in subarray $n$, to limit the channel estimation overhead and reduce the computational complexity.
We will treat $N$ as an optimization variable in this paper and stress that $N=M$ results in a conventional RIS with a different phase-shift at each RIS element. Determining the optimal number of elements, $M$, is important from a design perspective, however, once the RIS has been manufactured, it will not change. On a similar note, once the RIS and the BS have been deployed, the channel gain $\alpha$ between them will be a constant factor.
Since we have LoS paths from the RIS to the BS, all elements have a common gain of $\alpha$ but different phase-shifts. For the RIS element $i\in \{1,\ldots,M/N\}$ in subarray $n$, we let $\phi_{n,i}$ denote the phase-shift of the corresponding path, thus the channel coefficient is $\sqrt{\alpha} e^{-j\phi_{n,i}}$ as shown in Fig.~\ref{fig:system-model}.
The channel from the UE to RIS element $i$ in subarray $n$ is denoted and modeled as $b_{n,i} \sim \CN(0,\beta)$, where $\beta$ is the average channel gain. The channels $b_{n,i}$ and $b_{n',i'}$ are independent for $n\neq n'$ or $i\neq i'$.

In summary, the end-to-end channel response in \eqref{eq:received-signal} is
\begin{equation} \label{eq:end-to-end-channel}
g=
    %\sqrt{\rho} e^{-j \xi} 
        p+
    \sum_{n=1}^{N} e^{-j c_n}
    \underbrace{\sum_{i=1}^{M/N} \sqrt{\alpha}e^{-j \phi_{n,i}} %\sqrt{\beta}e^{-j\theta_{n,i}}
    b_{n,i}}_{=Z_n},
\end{equation}
where we define $Z_n=\sum_{i=1}^{M/N} \sqrt{\alpha}e^{-j \phi_{n,i}}b_{n,i}$ as the concatenated channel of subarray $n$.
We note that $Z_n \sim \CN\left(0,\frac{\alpha \beta M}{N}\right)$ and thus $|Z_n|\sim \mathrm{Rayleigh}\left(\sqrt{
\frac{\alpha \beta M}{2N}}\right)$ is Rayleigh distributed with the scale parameter $\sqrt{
\frac{\alpha \beta M}{2N}}$.
The term $e^{-j c_n}$ represents the common phase-shift $c_n$ introduced by the elements in subarray $n$.\footnote{We have not explicitly included the absorption losses in the RIS elements since these can be absorbed into the channel gains.}
By using \eqref{eq:end-to-end-channel}, we can rewrite \eqref{eq:received-signal} as
\begin{equation} \label{eq:received-signal2}
y=\left(p+\sum_{n=1}^{N} Z_n e^{-j c_n} \right)x+w.
\end{equation}
We will characterize the SNR that can be achieved in this system, depending on the available CSI and phase-shift resolution. We begin with the ideal case.

%{\

\subsection{Special case: Perfect CSI and perfect phase-shift resolution}
\label{subsec:perfect}

When the end-to-end channel in \eqref{eq:received-signal2} is used for data transmission, the instantaneous SNR achieved with an arbitrary RIS configuration $c_1,\ldots,c_N$ over the $N$ subarrays is 
\newline
\begin{equation} \label{eq:SNR}
\mathrm{SNR}(c_1,\ldots,c_N) = \ \frac{P_\mathrm{data}}{\sigma^2}\left\vert
p+ \sum_{n=1}^{N} Z_n  e^{-j c_n} \right\vert^2
\end{equation}
\newline
and is maximized by  $c_n=\arg(Z_n)-\arg(p)$  \cite{dinkelbach1967nonlinear}, which gives
\begin{equation} \label{eq:max-SNR}
    \overline{\mathrm{SNR}} = \max_{c_1,\ldots,c_N} \mathrm{SNR}(c_1,\ldots,c_N) = \!
\frac{P_\mathrm{data}}{\sigma^2} \! \left(|p|+ \sum_{n=1}^{N} |Z_n|\right)^2.
\end{equation}

When transmitting a finite-sized data payload, the average SNR determines the likelihood of successful reception (along with the modulation and coding scheme).

Before we derive the exact expression for the average SNR, we will consider a tight lower bound on it, which will be useful when optimizing the SNR and extracting analytical insights. 
A lower bound $\overline{\mathrm{SNR}}_{\rm av,low}$ on the average SNR (with the optimal RIS configuration) can be obtained using Jensen's inequality\footnote{According to Jensen's inequality, for a given random variable $X$ and a convex function of it, $f(X)$, it holds that $\mathbb{E}\left\{ f(X)\right\}\geq f\left(\mathbb{E}\left\{ X\right\}\right)$. We apply Jensen's inequality for the random variable $X=|p|+ \sum_{n=1}^{N} |Z_n|$ and the convex function $f(X)=X^2$.} as
\begin{align}
\label{eq:Jensen}
    \mathbb{E} \left\{\overline{\mathrm{SNR}}\right\} & \geq \overline{\mathrm{SNR}}_{\rm av,low} = \frac{P_\mathrm{data}}{\sigma^2} \left(\mathbb{E}\left\{|p|+ \sum_{n=1}^{N} |Z_n|\right\}\right)^2 \nonumber \\
    &\stackrel{(a)}{=} \frac{P_\mathrm{data}}{\sigma^2} \left(\frac{\sqrt{\pi}}{2}\sqrt{\rho}+  N  \frac{\sqrt{\pi}}{2}\sqrt{
\frac{\alpha \beta M}{N}}  \right)^2 \nonumber \\
&=\frac{\pi P_\mathrm{data}}{4\sigma^2} 
\left(\sqrt{\rho}+\sqrt{\alpha \beta M N}  \right)^2,
\end{align}
where the equality in $(a)$ follows from the fact that $|p|\sim \mathrm{Rayleigh}\left(\sqrt{
\frac{\rho}{2}}\right)$ and $E\{|p|\}=\frac{\sqrt{\pi}}{2}\sqrt{\rho}$. Similarly, $|Z_n|\sim \mathrm{Rayleigh}\left(\sqrt{
\frac{\alpha \beta M}{2N}}\right)$ and $E\left\{|Z_n|\right\} = \frac{\sqrt{\pi}}{2}\sqrt{
\frac{\alpha \beta M}{N}}$.

The following lemma gives the exact average SNR.

\begin{lemma}
With an optimized RIS configuration, the average of the SNR in \eqref{eq:max-SNR} is
\begin{equation} \label{eq:average-SNR}
\mathbb{E}\left\{\overline{\mathrm{SNR}}\right\} = \overline{\mathrm{SNR}}_{\rm av,low} + \frac{ P_\mathrm{data}}{ \sigma^2}\left(1-\frac{\pi}{4}\right)\left(\rho+\alpha\beta M\right) .
\end{equation}
\end{lemma}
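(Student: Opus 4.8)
The plan is to observe that $\overline{\mathrm{SNR}} = \frac{P_\mathrm{data}}{\sigma^2}\, R^2$, where $R = |p| + \sum_{n=1}^{N} |Z_n|$ is a sum of independent nonnegative random variables, and then to evaluate $\mathbb{E}\{R^2\}$ through the variance decomposition $\mathbb{E}\{R^2\} = (\mathbb{E}\{R\})^2 + \mathrm{Var}(R)$. The squared-mean term is already in hand: the computation in \eqref{eq:Jensen} shows precisely that $\frac{P_\mathrm{data}}{\sigma^2}(\mathbb{E}\{R\})^2 = \overline{\mathrm{SNR}}_{\rm av,low}$, i.e.\ the Jensen lower bound is exactly the contribution of $(\mathbb{E}\{R\})^2$. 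Hence the whole task collapses to computing $\mathrm{Var}(R)$, and the claimed identity is equivalent to the statement that the Jensen gap equals $\frac{P_\mathrm{data}}{\sigma^2}\,\mathrm{Var}(R)$.

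For the variance I would use the mutual independence of $p, Z_1, \ldots, Z_N$ (inherited from the stated independence of the $b_{n,i}$ across elements and subarrays), so that $\mathrm{Var}(R) = \mathrm{Var}(|p|) + \sum_{n=1}^{N} \mathrm{Var}(|Z_n|)$. Each term is the variance of a Rayleigh variable, for which $\mathrm{Var}(\mathrm{Rayleigh}(s)) = \frac{4-\pi}{2}s^2 = \left(1-\frac{\pi}{4}\right)(2s^2)$. Since $|p|\sim\mathrm{Rayleigh}(\sqrt{\rho/2})$ has $2s^2 = \rho$ and $|Z_n|\sim\mathrm{Rayleigh}(\sqrt{\alpha\beta M/(2N)})$ has $2s^2 = \alpha\beta M/N$, this gives $\mathrm{Var}(|p|) = \left(1-\frac{\pi}{4}\right)\rho$ and $\mathrm{Var}(|Z_n|) = \left(1-\frac{\pi}{4}\right)\frac{\alpha\beta M}{N}$. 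Summing the $N$ identical subarray terms cancels the factor $1/N$ and yields $\mathrm{Var}(R) = \left(1-\frac{\pi}{4}\right)(\rho + \alpha\beta M)$; multiplying by $P_\mathrm{data}/\sigma^2$ and adding $\overline{\mathrm{SNR}}_{\rm av,low}$ produces the stated expression.

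There is no genuine obstacle here, as this is a direct second-moment computation; the only points deserving care are justifying the additivity of variances, which rests on the independence of the $Z_n$ (and hence of their magnitudes), and recalling that the Rayleigh second moment is $2s^2$ rather than $s^2$, the same normalization already used implicitly when evaluating $\mathbb{E}\{|Z_n|\}$ in \eqref{eq:Jensen}. A pleasant byproduct of this route is the interpretation it supplies: the exact average SNR exceeds the Jensen bound by exactly $\frac{P_\mathrm{data}}{\sigma^2}$ times the total variance of the aggregate channel magnitude, which explains why the correction term $\left(1-\frac{\pi}{4}\right)(\rho+\alpha\beta M)$ is independent of both $N$ and the phase-shifts $\phi_{n,i}$.
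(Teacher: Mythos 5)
Your proof is correct, and it takes a cleaner route than the paper's. The paper proves the lemma by brute-force expansion: it writes out $\bigl(|p|+\sum_{n}|Z_n|\bigr)^2$ term by term, evaluates $\mathbb{E}\{|p|^2\}$, the cross terms $2\mathbb{E}\{|p|\}\sum_n\mathbb{E}\{|Z_n|\}$, the diagonal terms $\mathbb{E}\{|Z_n|^2\}$, and the off-diagonal terms $\mathbb{E}\{|Z_n|\}\mathbb{E}\{|Z_m|\}$ (using the same independence facts you invoke), and then must regroup the resulting four-term expression to recognize it as $\frac{\pi}{4}\bigl(\sqrt{\rho}+\sqrt{\alpha\beta MN}\bigr)^2+\bigl(1-\frac{\pi}{4}\bigr)(\rho+\alpha\beta M)$, finally identifying the first piece with the Jensen bound. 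Your mean--variance decomposition $\mathbb{E}\{R^2\}=(\mathbb{E}\{R\})^2+\mathrm{Var}(R)$ short-circuits that regrouping: the Jensen bound appears by construction as the squared-mean contribution, and the correction term is immediately the total variance $\mathrm{Var}(|p|)+\sum_n\mathrm{Var}(|Z_n|)$ of independent Rayleigh magnitudes. What this buys is not just brevity but interpretation --- it makes manifest that the Jensen gap equals $\frac{P_\mathrm{data}}{\sigma^2}\,\mathrm{Var}(R)$ and explains, rather than merely exhibits, why the gap $\bigl(1-\frac{\pi}{4}\bigr)(\rho+\alpha\beta M)$ is independent of $N$ (the $1/N$ in each subarray variance cancels against the $N$ identical terms), a fact the paper needs later when arguing that optimizing the bound also optimizes the exact SNR. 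The two proofs rest on identical ingredients (independence of $p,Z_1,\ldots,Z_N$ and Rayleigh first/second moments), so yours is a reorganization rather than a new idea, but it is the tidier of the two.
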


\begin{IEEEproof}
By expanding the square in \eqref{eq:max-SNR}, we obtain 
\begin{align} \begin{split}
\mathbb{E} \left\{\overline{\mathrm{SNR}}\right\}=&
\frac{P_\mathrm{data}}{\sigma^2}\Bigg(\mathbb{E}\left\{|p|^2\right\}+2\mathbb{E}\left\{|p|\right\}\sum_{n=1}^{N} \mathbb{E}\left\{|Z_n|\right\} \\
&+\sum_{n=1}^{N} \sum_{m=1}^{N}\mathbb{E}\left\{|Z_n||Z_m|\right\}\Bigg), \end{split}
\end{align}
where we have used the independence of $Z_n$ from $p$. Moreover, noting that $\mathbb{E}\left\{|Z_n||Z_m|\right\}=\mathbb{E}\left\{|Z_n|\right\}\mathbb{E}\left\{|Z_m|\right\}$ for $n\neq m$, due to the independence of $Z_n$ and $Z_m$, we can simplify $\mathbb{E} \left\{\overline{\mathrm{SNR}}\right\}$ as
\begin{align}
&\frac{P_\mathrm{data}}{\sigma^2} \Bigg(\rho+\frac{\pi}{2}\sqrt{\rho\alpha\beta M N} + N \frac{\alpha \beta M}{N} \nonumber\\
&\hspace{12mm}+N(N-1)\left( \frac{\sqrt{\pi}}{2} \sqrt{\frac{\alpha\beta M}{N}}\right)^2 \Bigg) \nonumber \\
&= \frac{ P_\mathrm{data}}{ \sigma^2} \Bigg( \frac{\pi}{4}
\left(\sqrt{\rho}+\sqrt{\alpha \beta M N}\right)^2
+\left(1-\frac{\pi}{4}\right) \!(\rho+\alpha \beta M) \!
\Bigg),   
\label{eq:exactSNR}
\end{align}
where we have used that $E\{|p|\}=\frac{\sqrt{\pi}}{2}\sqrt{\rho}$ and $E\left\{|Z_n|\right\} = \frac{\sqrt{\pi}}{2}\sqrt{
\frac{\alpha \beta M}{N}}$. We finally obtain \eqref{eq:average-SNR} by recalling the expression of the lower bound $\overline{\mathrm{SNR}}_{\rm av,low}$ in \eqref{eq:Jensen}.

\end{IEEEproof}

The exact average SNR in \eqref{eq:average-SNR} is an increasing function of the channel gains $\rho$, $\alpha$, $\beta$, the number of RIS elements $M$, and the number of subarrays $N$.
We further note that \eqref{eq:average-SNR} is a summation of its lower bound in \eqref{eq:Jensen} and a term that is independent of $N$. Hence, for a given $M$, the number of subarrays $N$ maximizing the lower bound of the average SNR also maximizes the exact average SNR. 
$\overline{\mathrm{SNR}}_{\rm av,low}$ is always greater than the additional term in \eqref{eq:average-SNR}, which is often negligible. To demonstrate this, Fig.~\ref{fig:ESNR} compares the exact SNR expression in \eqref{eq:average-SNR} and its lower bound in \eqref{eq:Jensen}. We consider a setup where the channel gains are
$\alpha=-80$\,dB, $\beta=-60$\,dB, and $\rho=-95$\,dB. The RIS consists of $M=1024$ elements, which can be divided into $N=2^R$ number of subarrays for $R=0,1, \dotsc ,10$. The ``transmit SNR'' is $P_\mathrm{data}/\sigma^2=104$\,dB. Also shown is a comparison to not using subarrays but instead $N$ individually configured elements. In this baseline case, where $M-N$ elements are turned off as in prior work \cite{9244106,9333612}, we obtain the average maximum SNR by inserting $N$ in place of $M$ in \eqref{eq:exactSNR} as
\begin{align}
\begin{split}    
\mathbb{E}\left\{\overline{\mathrm{SNR}}_b\right\}=&\hspace{2mm}\frac{ P_\mathrm{data}}{ \sigma^2} \Bigg(
\frac{\pi}{4}\left(\sqrt{\rho}+N\sqrt{\alpha \beta}\right)^2 \\
&+\left(1-\frac{\pi}{4}\right) \!(\rho+\alpha \beta N) \!
\Bigg).\end{split}
\end{align}
It is clearly seen that
%\begin{equation}
$\mathbb{E}\left\{\overline{\mathrm{SNR}}_b\right\} \leq \mathbb{E}\left\{\overline{\mathrm{SNR}}\right\}$
%\end{equation}
with equality only if $N=M$. This is also demonstrated in Fig.~\ref{fig:ESNR}. Furthermore, as $N$ increases, the gap between the exact SNR expression in \eqref{eq:average-SNR} and its lower bound in \eqref{eq:Jensen} decreases as also verified in the figure. The black stars are the average from 10\,000 Monte Carlo realizations, confirming the validity of the formulas.

\begin{figure}[t!] \centering
\centerline{\includegraphics[trim=8 2 25 15,clip,width=0.99\columnwidth]{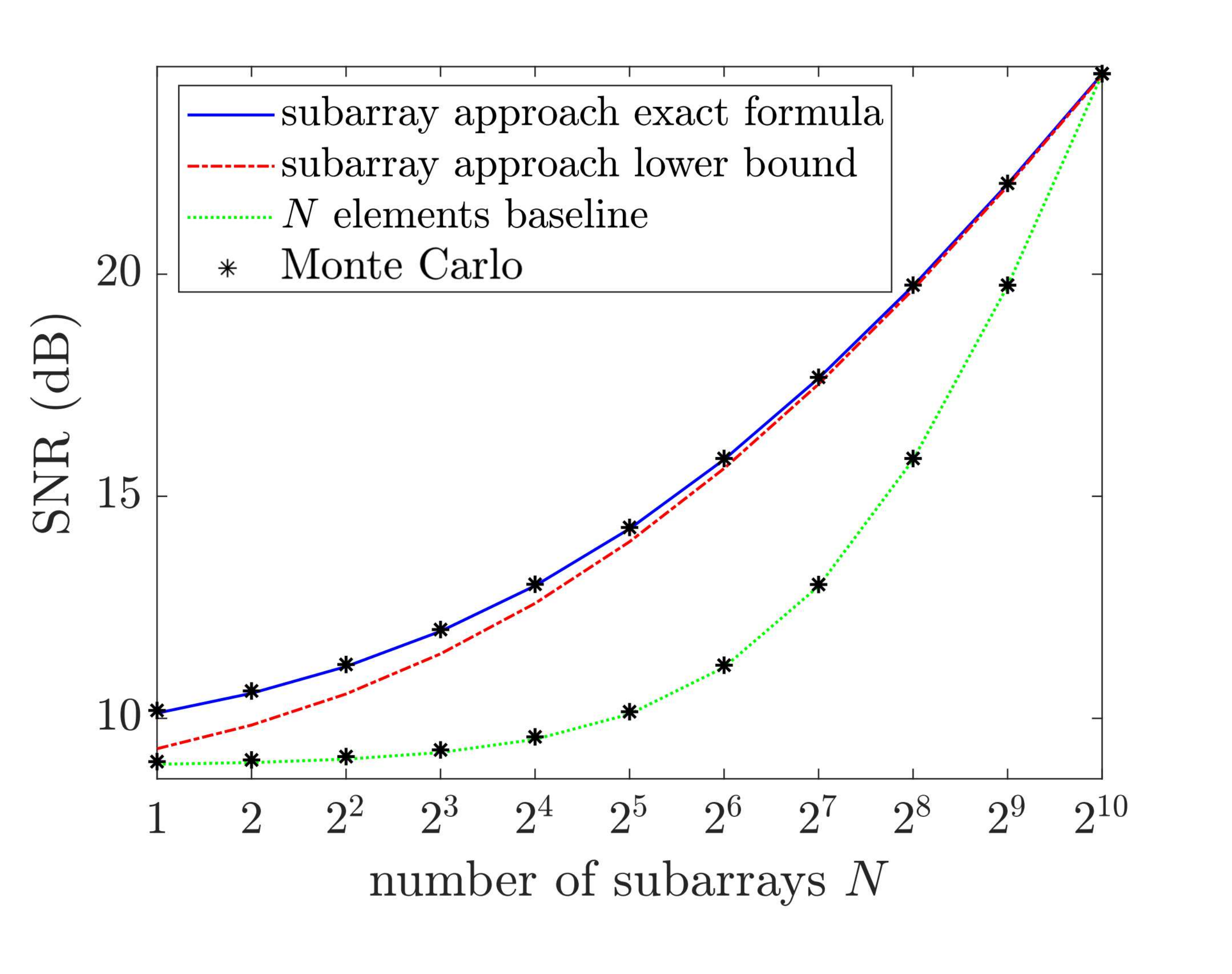}}  
\caption{The average SNR versus $N$ with an optimized RIS configuration for $M=1024$ RIS elements being divided into $N=2^R$ subarrays for $R=0,1, \dotsc ,10$. The baseline with only $N$ individually controlled elements is also shown. {\color{black}The black stars are computed as averages from 10\,000 Monte Carlo simulations.}
} \label{fig:ESNR}
\end{figure}

\subsection{Characterizing the SNR during pilot transmission}

In the remainder of this section, we will extend the SNR analysis to cases with imperfect CSI. It is then important to distinguish between the SNR during pilot transmission (before the RIS has been configured) and the SNR during data transmission (after the RIS has been configured). We begin by characterizing the SNR during pilot transmission.

To determine the RIS configuration that maximizes the data-transmission SNR in \eqref{eq:max-SNR}, we need to learn the current channel realization. 
When using $N$ subarrays, there are effectively $N+1$ channel coefficients in  \eqref{eq:received-signal2} to estimate; one for each subarray with RIS elements plus the direct path. Let %ven as
\begin{align} \label{eq:channel-h}
\textbf{h} =    \begin{bmatrix} p & Z_1 & \hdots & Z_{N} \end{bmatrix}^{\Ttran} \in \mathbb{C}^{N+1}
\end{align}
denote the vector containing these channel coefficients from \eqref{eq:received-signal2}.
The benefit of using subarrays is apparent in that the number of coefficients to be estimated is lower than the total number of RIS elements.
Since there are $N+1$ unknown coefficients, we assume the UE sends a predefined pilot sequence of length $N+1$  to obtain sufficient degrees-of-freedom to estimate the individual coefficients in $\textbf{h}$. 
Without loss of generality, the pilot sequence is constant $\sqrt{P_\mathrm{pilot}}[1 \ \ldots \ 1 ]^{\Ttran}\in \mathbb{C}^{N+1}$, where $P_\mathrm{pilot}$ is the transmit power.
During the pilot transmission, the RIS changes its configuration according to a structured deterministic phase-shift pattern. Let $\bm{\psi}_t\in \mathbb{C}^{N+1}$ be the vector with $1$ as its first entry (representing the direct path) and the configured phase-shifts $e^{-j\psi_{t,n}}$ for subarrays $n=1,\ldots,N$ during the $t$th pilot  transmission as the other entries. This vector is given as
\begin{align}
   \bm{\psi}_t = \begin{bmatrix} 1 & e^{-j\psi_{t,1}} & \hdots & e^{-j\psi_{t,N}} \end{bmatrix}^{\Ttran}, \quad t=1,\ldots,N+1.
\end{align}
 Then, the received signal at the BS during the pilot transmission is given in vector form as
\begin{align}
    \textbf{y}_\mathrm{pilot} = \sqrt{P_\mathrm{pilot}}\underbrace{\begin{bmatrix} \bm{\psi}_1^{\Ttran} \\ \vdots \\ \bm{\psi}_{N+1}^{\Ttran}  \end{bmatrix}}_{\triangleq \boldsymbol{\Psi}}\textbf{h}+ \textbf{w}_\mathrm{pilot},
\end{align}
where $\textbf{w}_\mathrm{pilot}\sim \CN(\textbf{0},\sigma^2\textbf{I}_{N+1})$ is the thermal receiver noise. If we select the matrix $\boldsymbol{\Psi}$ as any scaled unitary matrix (e.g., a discrete Fourier transform matrix), then $\boldsymbol{\Psi}^{\Htran}\boldsymbol{\Psi}=(N+1)\textbf{I}_{N+1}$ and the BS can estimate the individual paths as
\begin{align} 
\label{eq:unitarypilot}
 \underbrace{\frac{\boldsymbol{\Psi}^{\Htran}\textbf{y}_\mathrm{pilot}}{\sqrt{N+1}}}_{\triangleq \overline{\textbf{y}}_{\mathrm{pilot}}}  = \sqrt{(N+1)P_\mathrm{pilot}}\textbf{h} +  \underbrace{\frac{\boldsymbol{\Psi}^{\Htran}\textbf{w}_\mathrm{pilot}}{\sqrt{N+1}}}_{\triangleq \overline{\textbf{w}}_{\mathrm{pilot}}} ,
\end{align}
where the effective noise $\overline{\textbf{w}}_{\mathrm{pilot}}$ has the same distribution as $\textbf{w}_\mathrm{pilot}$ since $\boldsymbol{\Psi}^{\Htran}/\sqrt{N+1}$ is a unitary matrix \cite{massivemimobook}. As the entries of $\overline{\textbf{y}}_{\mathrm{pilot}}$ and $\overline{\textbf{w}}_{\mathrm{pilot}}$ are independent, the minimum mean square error (MMSE) estimates of the entries of $\textbf{h}$ from \eqref{eq:channel-h} can be obtained as follows:
\begin{align}
    \hat{p} =& \frac{\mathbb{E}\left\{ p \overline{y}_0^*\right\}}{\mathbb{E}\left\{|\overline{y}_0|^2\right\}} \overline{y}_0 = \frac{\sqrt{(N+1)P_{\mathrm{pilot}}}\mathbb{E}\{|p|^2\}}{(N+1)P_{\mathrm{pilot}}\mathbb{E}\{|p|^2\}+\sigma^2}\overline{y}_0\nonumber\\
=&\frac{\sqrt{(N+1)P_{\mathrm{pilot}}}\rho}{(N+1)P_{\mathrm{pilot}}\rho+\sigma^2}\overline{y}_0, \label{eq:hatp}\\
  \hat{Z}_n =& \frac{\mathbb{E}\left\{ Z_n \overline{y}_n^*\right\}}{\mathbb{E}\left\{|\overline{y}_n|^2\right\}} \overline{y}_n = \frac{\sqrt{(N+1)P_{\mathrm{pilot}}}\mathbb{E}\{|Z_n|^2\}}{(N+1)P_{\mathrm{pilot}}\mathbb{E}\{|Z_n|^2\}+\sigma^2}\overline{y}_n\nonumber\\
=&\frac{\sqrt{(N+1)P_{\mathrm{pilot}}}\frac{\alpha\beta M}{N}}{(N+1)P_{\mathrm{pilot}}\frac{\alpha\beta M}{N}+\sigma^2}\overline{y}_n, \label{eq:hatZn}
\end{align}
for $n=1,\ldots,N$, where
\begin{align}
\overline{\textbf{y}}_{\mathrm{pilot}}=\begin{bmatrix} \overline{y}_0 & \overline{y}_1 & \hdots & \overline{y}_N \end{bmatrix}^{\Ttran} \in \mathbb{C}^{N+1}.
\end{align}

The channel estimates have zero mean and their variances can be obtained as
\begin{align}
\label{eq:p_hat}
&\mathbb{E}\left\{\left|\hat{p}\right|^2\right\} = \frac{(N+1)P_{\mathrm{pilot}}\rho^2}{(N+1)P_{\mathrm{pilot}}\rho+\sigma^2},\\
& \label{eq:z_hat}\mathbb{E}\left\{\left|\hat{Z}_n\right|^2\right\} = \frac{(N+1)P_{\mathrm{pilot}}\left(\frac{\alpha\beta M}{N}\right)^2}{(N+1)P_{\mathrm{pilot}}\frac{\alpha\beta M}{N}+\sigma^2}, \quad n=1,\ldots,N.
\end{align}

Recalling the entries of the channel vector $\textbf{h}$ from \eqref{eq:channel-h}, we have different pilot SNR for the direct BS-UE path and the paths through the $N$ subarrays of RIS. Using $\mathbb{E}\left\{|p|^2\right\}=\rho$ and $\mathbb{E}\left\{|Z_n|^2\right\}=\frac{\alpha\beta M}{N}$, the pilot SNR for the direct channel and the cascaded RIS channels are given as 
\begin{align} \label{eq:SNR_pilot}
\mathrm{SNR}_\mathrm{pilot} = \begin{cases} \frac{(N+1)P_\mathrm{pilot}\rho}{\sigma^2} & \text{for the direct channel}, \\
\frac{(N+1)P_\mathrm{pilot}\alpha\beta M}{N\sigma^2} & \text{for the RIS channels}.\end{cases}
\end{align}

%I.

\subsection{General case: Phase-shift quantization and imperfect CSI}

When operating the RIS in practical conditions, the CSI will not only be imperfect but also the phase-shift resolution will be limited. There are two main reasons for this. Firstly, the control information between the RIS and the BS must have a finite resolution. Secondly, the number of distinct impedance values that the RIS elements support (e.g., by selecting different bias voltages) is often limited in practical implementations \cite{sayanskiy20222d}. In this section, we analyze how the SNR is affected when the RIS subarrays (or individual elements) are only allowed to have phase-shifts selected from a discrete set of values.

We consider that the phase-shifts at each element of the RIS can only take a finite number
of discrete values, which are equally spaced in [0, $2\pi$). Let $b$ denote the number of bits used to represent each of the
phase-shift values. The quantized phase-shift configurations with
either $b=1$, $b=2$, or $b=3$ bits of resolution corresponding to either $2^1=2$, $2^2=4$, or $2^3=8$ different possible reflection coefficients are illustrated in Fig.~\ref{fig:quantSNRR}, where $\hat{c}_n$ denotes the quantized phase-shift introduced by subarray $n$. The possible quantized phase-shifts for a 1-bit RIS are $+\frac{\pi}{4}$ and $-\frac{3\pi}{4}$. For a 2-bit RIS, the four possible quantized phase-shift values are $\pm \frac{\pi}{4}$ and $\pm \frac{3\pi}{4}$, and for a 3-bit RIS, the quantized phase-shifts are $0$, $\pi$,  $\pm \frac{\pi}{2}$, $\pm \frac{\pi}{4}$, and $\pm \frac{3\pi}{4}$. The results in this paper are not limited to these specific phase-shift values but only utilize the fact that the phases are equally spaced on the unit circle.

\begin{figure}[t!]
\centerline{\includegraphics[width=0.88\columnwidth]{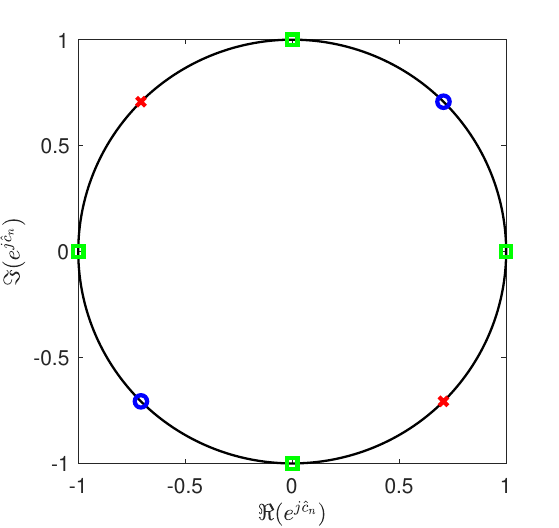}} 
\caption{The quantized phase-shift configurations. The two blue circular points are available for a 1-bit RIS. The blue points and the points marked by red crosses are available when using a 2-bit RIS. All the eight points are available for a 3-bit RIS. For an RIS with ``perfect phase-shift resolution'' as in Section~\ref{subsec:perfect}, any point on the unit circle may be utilized.}  \label{fig:quantSNRR}
\end{figure}

%HE. 

The adverse effect of the finite quantization resolution on the SNR as a function of the number of subarrays can be seen in Fig.~\ref{fig:quantSNR}. In this setup, $P_\mathrm{data}/\sigma^2=104$\,dB and we assume perfect CSI and focus on a setup without a direct path, i.e., $p=0$. The other channel gains are $\alpha=-80$\,dB and $\beta=-60$\,dB.  $M=1024$ RIS elements are divided into subarray sizes of $N=2^R$ for $R=0,1, \dotsc ,10$. The baseline of no quantization, which stands for infinite phase-shift resolution, is also shown. Later, we will quantify the SNR gap between the quantized phase-shift case and the no quantization in Lemma~\ref{lem:upperbound-SNR-loss}. For now, it is observed from the figure that using three bits for phase-shifts almost achieves the same SNR as the perfect resolution does. The SNR loss due to using two bits is also small.

When we only have access to imperfect CSI, the RIS phase-shifts for each subarray $n$ would ideally (when there is no quantization) be obtained as $c_n=\arg(\hat{Z}_n)-\arg(\hat{p})$ using the channel estimates from \eqref{eq:hatp} and \eqref{eq:hatZn}. This is in line with maximizing the signal strength of the known part of the end-to-end channel in \eqref{eq:max-SNR}. Accounting for the combined effect of the imperfect CSI and RIS phase-shift quantization, we can write: $\hat{c}_n= c_n+\tilde{c}_n=\arg(\hat{Z}_n)-\arg(\hat{p})+\tilde{c}_n$.  We have $\hat{c}_n = Q_b \left(e^{j\arg(\hat{Z}_n)}e^{-j\arg\left(\hat{p}\right)}\right) =\arg(\hat{Z}_n)-\arg(\hat{p})  + \tilde{c}_n$, where $Q_b(\cdot)$ is the $b$-bit quantization operator that maps the complex input to the nearest quantized phase-shift value on the unit circle as shown in Fig.~\ref{fig:quantSNRR}. When the quantized values of the phase-shifts maximizing the signal strength of the known part of the end-to-end channel are introduced by the RIS, the SNR becomes
\begin{equation} \label{eq:SNR-imperfectCSI}
\mathrm{SNR}(c_1,\ldots,c_N) = \ \frac{P_\mathrm{data}}{\sigma^2}\left\vert
p+ \sum_{n=1}^{N} Z_n  e^{-j \hat{c}_n} \right\vert^2.
\end{equation}

 The probability distribution of the quantization error $\tilde{c}_n$ can be obtained by noting the circular symmetry of the quantization operator. It is a fact that the quantization error for any phase-shift value is uniformly distributed on $\left[-\pi/K,\pi/K\right)$ for $K=2^b$ equally-spaced quantization points on the unit circle. We denote this by $\tilde{c}_n \sim \mathcal{U}\left[-\pi/K,\pi/K\right)$. By utilizing the moment generating function of $\tilde{c}_n$, we obtain a result that will be needed in the following derivation. The moment generating function of $\tilde{c}_n$ is given as
\begin{align}
    \mathbb{E}\left\{ e^{j \tilde{c}_n}\right \}&=\int_{-\pi/K}^{\pi/K} \frac{1}{2\pi/K}e^{j u}du= \frac{e^{j\pi/K}-e^{-j \pi/K}}{j(2\pi/K)}\nonumber\\
    &=\sinc (1/K) \label{eq:MGF}
\end{align}
where $\sinc(x)=\sin(\pi x)/(\pi x)$ is the sinc-function. 
That is, the implemented reflection coefficient is equal to the desired phase-shift obtained by the channel estimates, i.e., $c_n=\arg(\hat{Z}_n)-\arg\left(\hat{p}\right)$, plus the uniformly distributed quantization error, i.e., $\tilde{c}_n=Q_b(e^{jc_n})-c_n$. This leads us to our first main result that takes into account the combined effect of quantization and imperfect CSI.

\begin{theorem} \label{th:general}
When the phase-shifts of the RIS elements in subarray $n$ are selected as $\hat{c}_n = Q_b \left(e^{j\arg(\hat{Z}_n)}e^{-j\arg\left(\hat{p}\right)}\right)$ with $2^b=K$ equally-spaced quantization points on the unit circle, the average value of the SNR in \eqref{eq:SNR-imperfectCSI} becomes
\begin{align} 
\begin{split}
&\mathbb{E}\left\{  \overline{\mathrm{SNR}}\right\}=\frac{P_\mathrm{data}}{\sigma^2}\Bigg(\rho+\alpha \beta M \\
&+ \frac{\pi}{2}\frac{\rho \alpha \beta M P_{\mathrm{pilot}}/\sigma^2 \sinc(1/K)}{\sqrt{\left(\rho P_{\mathrm{pilot}}/\sigma^2+\frac{1}{N+1}\right)\left(\frac{\alpha \beta M}{N} P_{\mathrm{pilot}}/\sigma^2+ \frac{1}{N+1}\right)}}\\
    &+\frac{\pi}{4}\left(1-\frac{1}{N}\right)\frac{(\alpha \beta M)^2P_{\mathrm{pilot}}/\sigma^2 \sinc^2(1/K)}{\frac{\alpha \beta M}{N} P_{\mathrm{pilot}}/\sigma^2+\frac{1}{N+1}}\Bigg). \label{eq:lemma2}
    \end{split}
    \end{align}
\end{theorem}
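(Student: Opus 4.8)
The plan is to expand the squared magnitude in \eqref{eq:SNR-imperfectCSI} and take the expectation group by group. Writing $\left\vert p+ \sum_{n} Z_n e^{-j\hat{c}_n}\right\vert^2 = |p|^2 + \sum_{n} |Z_n|^2 + 2\sum_{n}\Re\{p^* Z_n e^{-j\hat{c}_n}\} + \sum_{n\neq m} Z_n Z_m^* e^{-j(\hat{c}_n-\hat{c}_m)}$, the two diagonal groups are immediate: $\mathbb{E}\{|p|^2\}=\rho$ and $\sum_n \mathbb{E}\{|Z_n|^2\} = N\cdot \frac{\alpha\beta M}{N} = \alpha\beta M$, giving the $\rho+\alpha\beta M$ contribution. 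The two groups of cross terms are what generate the $\sinc(1/K)$ and $\sinc^2(1/K)$ terms, respectively.

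For the cross terms I would substitute $\hat{c}_n = \arg(\hat{Z}_n)-\arg(\hat{p})+\tilde{c}_n$, so that $e^{-j\hat{c}_n}=e^{-j\arg(\hat{Z}_n)}e^{j\arg(\hat{p})}e^{-j\tilde{c}_n}$. Since each $\tilde{c}_n$ is independent of the channels and of the other quantization errors, the factors $e^{-j\tilde{c}_n}$ separate out, contributing $\mathbb{E}\{e^{-j\tilde{c}_n}\}=\sinc(1/K)$ once for each $p$--$Z_n$ term and $\mathbb{E}\{e^{-j\tilde{c}_n}\}\mathbb{E}\{e^{j\tilde{c}_m}\}=\sinc^2(1/K)$ for each $Z_n$--$Z_m$ term, by \eqref{eq:MGF} and the symmetry of the error distribution. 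A structural point worth highlighting is that $\arg(\hat{p})$ cancels inside $\hat{c}_n-\hat{c}_m$ but survives (with a $+$ sign) in the $p$--$Z_n$ terms, which is precisely why the two contributions take different algebraic forms.

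The crux is evaluating the residual channel expectations. The key identity is $\mathbb{E}\{p^* e^{j\arg(\hat{p})}\}=\mathbb{E}\{|\hat{p}|\}$: decomposing $p=\hat{p}+\tilde{p}$ into the MMSE estimate and its error, these are jointly Gaussian and uncorrelated, hence independent, and $\tilde{p}$ is zero-mean, so the error term vanishes while $\hat{p}^* e^{j\arg(\hat{p})}=|\hat{p}|$; likewise $\mathbb{E}\{Z_n e^{-j\arg(\hat{Z}_n)}\}=\mathbb{E}\{|\hat{Z}_n|\}$. Using the independence of $(p,\hat{p})$ from $(Z_n,\hat{Z}_n)$ and across subarrays, the $p$--$Z_n$ terms collapse to $N\sinc(1/K)\mathbb{E}\{|\hat{p}|\}\mathbb{E}\{|\hat{Z}_1|\}$ and the $Z_n$--$Z_m$ terms to $N(N-1)\sinc^2(1/K)\big(\mathbb{E}\{|\hat{Z}_1|\}\big)^2$.

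Finally I would convert first moments to second moments via the Rayleigh relations $\mathbb{E}\{|\hat{p}|\}=\frac{\sqrt{\pi}}{2}\sqrt{\mathbb{E}\{|\hat{p}|^2\}}$ and $\mathbb{E}\{|\hat{Z}_1|\}=\frac{\sqrt{\pi}}{2}\sqrt{\mathbb{E}\{|\hat{Z}_1|^2\}}$ (valid since $\hat{p},\hat{Z}_n$ are circularly symmetric Gaussian), insert the variances \eqref{eq:p_hat} and \eqref{eq:z_hat}, divide through by $\sigma^2$ to express everything via $P_\mathrm{pilot}/\sigma^2$, and use $N\cdot\frac{\alpha\beta M}{N}=\alpha\beta M$ together with $N(N-1)=N^2(1-\frac{1}{N})$ to recover the $\frac{\pi}{2}$ and $\frac{\pi}{4}(1-\frac{1}{N})$ terms exactly. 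I expect the main obstacle to be the residual phase expectations, i.e., rigorously establishing $\mathbb{E}\{p^* e^{j\arg(\hat{p})}\}=\mathbb{E}\{|\hat{p}|\}$ from the MMSE estimate/error independence and cleanly factoring out the quantization errors; once that identity is secured, the remainder is bookkeeping.
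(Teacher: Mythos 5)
Your proposal follows essentially the same route as the paper's proof: the same expansion of the square into the direct term, the diagonal terms giving $\rho+\alpha\beta M$, the $p$--$Z_n$ cross terms, and the $Z_n$--$Z_m$ cross terms; the same decomposition into MMSE estimates plus independent zero-mean errors; the same Rayleigh first-moment conversion via \eqref{eq:p_hat}--\eqref{eq:z_hat}; and the same use of the moment generating function \eqref{eq:MGF}. The bookkeeping you outline does reproduce \eqref{eq:lemma2}.

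However, the one step you assert rather than prove is exactly the step where the paper has to do something non-obvious: your claim that each $\tilde{c}_n$ is ``independent of the other quantization errors.'' This is not immediate, because $\tilde{c}_n$ and $\tilde{c}_m$ are deterministic functions of $c_n=\arg(\hat{Z}_n)-\arg(\hat{p})$ and $c_m=\arg(\hat{Z}_m)-\arg(\hat{p})$, which share the common random phase $\arg(\hat{p})$; a priori they are coupled, and the factorization $\mathbb{E}\{e^{-j\tilde{c}_n}e^{j\tilde{c}_m}\}=\mathbb{E}\{e^{-j\tilde{c}_n}\}\mathbb{E}\{e^{j\tilde{c}_m}\}$ is precisely what must be justified. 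The paper resolves this by the law of iterated expectation: conditioned on $\arg(\hat{p})$, the errors $\tilde{c}_n$ and $\tilde{c}_m$ are functions of the independent phases $\arg(\hat{Z}_n)$ and $\arg(\hat{Z}_m)$, hence conditionally independent, and each is conditionally $\mathcal{U}[-\pi/K,\pi/K)$ regardless of the conditioning value, so $\mathbb{E}\{e^{-j\tilde{c}_n}e^{j\tilde{c}_m}\}=\sinc^2(1/K)$ follows after taking the outer expectation. Your assertion is true, but only by virtue of this conditioning argument. Relatedly, ``$\tilde{c}_n$ is independent of the channels'' is too strong as stated --- $\tilde{c}_n$ is a deterministic function of the estimates' phases; what you actually need (and what holds, by the modulus-phase independence of the circularly symmetric Gaussians $\hat{p}$ and $\hat{Z}_n$) is its independence from the moduli $|\hat{p}|$, $|\hat{Z}_n|$ and from the estimation errors $\tilde{p}$, $\tilde{Z}_n$. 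With that one conditioning argument supplied, your proof is complete and coincides with the paper's.
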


\begin{IEEEproof}
By expanding the square in \eqref{eq:SNR-imperfectCSI} we obtain
\begin{align}
    &\mathbb{E}\left\{\overline{\mathrm{SNR}} \right\}=\frac{P_\mathrm{data}}{\sigma^2}\mathbb{E}\left\{\left\vert p+
\sum_{n=1}^{N}  Z_n  e^{-j \hat{c}_n} \right\vert^2 \right\} \nonumber \\
&= \frac{P_\mathrm{data}}{\sigma^2}\mathbb{E}\left\{\left( p +\sum_{n=1}^{N} Z_n  e^{-j \hat{c}_n}   \right)\left( p^* +\sum_{m=1}^{N} Z_m^*  e^{j \hat{c}_m}   \right)\right\} \nonumber \\
&=  \frac{P_\mathrm{data}}{\sigma^2}\mathbb{E}\Bigg\{|p|^2+p\sum_{m=1}^{N} Z_m^*  e^{j \hat{c}_m} +p^*\sum_{n=1}^{N} Z_n  e^{-j \hat{c}_n}  \nonumber \\
&\hspace{17mm}+\sum_{n=1}^{N} \sum_{m=1}^{N}Z_n  e^{-j \hat{c}_n}Z_m^*  e^{j \hat{c}_m}\Bigg\} \nonumber \\
&\stackrel{(a)}{=}\frac{P_\mathrm{data}}{\sigma^2}\Bigg(\underbrace{\mathbb{E}\left\{|p|^2\right\}}_{=\rho} 
+N\underbrace{\mathbb{E}\left\{pZ_n^*e^{j \hat{c}_n}\right\}}_{=\mathcal{A}} \nonumber \\
&\hspace{17mm}+N\underbrace{\mathbb{E}\left\{p^*Z_n e^{-j \hat{c}_n}\right\}}_{=\mathcal{A}^*}+N\underbrace{\mathbb{E}\left\{ |Z_n|^2\right\}}_{=\frac{\alpha\beta M}{N}}\nonumber\\
&\hspace{17mm}+\sum_{n=1}^N\condSum{m=1}{m\neq n}{N}\underbrace{\mathbb{E}\left\{Z_n Z_m^* e^{-j (\hat{c}_n - \hat{c}_m)}\right\}}_{=\mathcal{B}}\Bigg) \nonumber \\
&= \frac{P_\mathrm{data}}{\sigma^2}\big( \rho + 2N\Re\{\mathcal{A}\} +\alpha \beta M+N(N-1)\mathcal{B}\big) \label{eq:result-lemma2},
\end{align}
where we utilized $\mathbb{E}\left\{|p|^2\right\}=\rho$ and $\mathbb{E}\left\{|Z_n|^2\right\}=\frac{\alpha\beta M}{N}$ in $(a)$. Moreover, we have defined $\mathcal{A}=\mathbb{E}\left\{pZ_n^*e^{j \hat{c}_n}\right\}$, $\forall n$ and $\mathcal{B}=\mathbb{E}\left\{Z_n Z_m^* e^{-j (\hat{c}_n - \hat{c}_m)}\right\}$, $\forall n\neq m$. We compute $\mathcal{A}$ and $\mathcal{B}$ as follows.

\emph{1) Computation of $\mathcal{A}$.}

Noting that $Z_n=\hat{Z}_n+\tilde{Z}_n$, $p=\hat{p}+\tilde{p}$, where $\tilde{Z}_n$ and $\tilde{p}$   are the channel estimation errors, and $\hat{c}_n= \arg(\hat{Z}_n)-\arg\left(\hat{p}\right)+\tilde{c}_n$, we express $\mathcal{A}$ as
\begin{align}
    \mathcal{A}&=\mathbb{E}\left\{pZ_n^*e^{j \hat{c}_n}\right\} \nonumber \\
    &=\mathbb{E}\left\{ \left(\hat{p}+\tilde{p}\right)\left(\hat{Z}_n^*+\tilde{Z}_n^*\right)e^{j\arg\left(\hat{Z}_n\right)}e^{-j\arg\left(\hat{p}\right)}e^{j\tilde{c}_n}\right\} \nonumber \\
    &=\mathbb{E}\Big\{ \left(\hat{p}e^{-j\arg\left(\hat{p}\right)}+\tilde{p}e^{-j\arg\left(\hat{p}\right)}\right) \nonumber \\ &\hspace{9mm}\times\left(\hat{Z}_n^*e^{j\arg\left(\hat{Z}_n\right)}+\tilde{Z}_n^*e^{j\arg\left(\hat{Z}_n\right)}\right)e^{j\tilde{c}_n}\Big\} \nonumber \\
    &=\mathbb{E}\left\{ \left(\left|\hat{p}\right|+\tilde{p}e^{-j\arg\left(\hat{p}\right)}\right)\left(\left|\hat{Z}_n\right|+\tilde{Z}_n^*e^{j\arg\left(\hat{Z}_n\right)}\right)e^{j\tilde{c}_n}\right\} \nonumber\\
    &= \mathbb{E}\left\{\left|\hat{p}\right|\left|\hat{Z}_n\right|e^{j\tilde{c}_n}\right\} + \mathbb{E}\left\{\left|\hat{p}\right|\tilde{Z}_n^*e^{j\arg\left(\hat{Z}_n\right)}e^{j\tilde{c}_n}\right\}  \nonumber\\
    &\hspace{4mm}+ \mathbb{E}\left\{\tilde{p}e^{-j\arg\left(\hat{p}\right)}\left|\hat{Z}_n\right|e^{j\tilde{c}_n}\right\} \nonumber \\
    &\hspace{4mm}+ \mathbb{E}\left\{\tilde{p}e^{-j\arg\left(\hat{p}\right)}\tilde{Z}_n^*e^{j\arg\left(\hat{Z}_n\right)}e^{j\tilde{c}_n}\right\}  \nonumber \\
    &\stackrel{(a)}{=}\mathbb{E}\left\{\left|\hat{p}\right|\left|\hat{Z}_n\right|e^{j\tilde{c}_n}\right\} + \underbrace{\mathbb{E}\left\{\tilde{Z}_n^*\right\}}_{=0}\mathbb{E}\left\{\left|\hat{p}\right|e^{j\arg\left(\hat{Z}_n\right)}e^{j\tilde{c}_n}\right\} \nonumber \\
    &\hspace{4mm}+ \underbrace{\mathbb{E}\left\{\tilde{p}\right\}}_{=0}\mathbb{E}\left\{e^{-j\arg\left(\hat{p}\right)}\left|\hat{Z}_n\right|e^{j\tilde{c}_n}\right\} \nonumber \\
    &\hspace{4mm}+ \underbrace{\mathbb{E}\left\{\tilde{p}\right\}}_{=0}\underbrace{\mathbb{E}\left\{\tilde{Z}_n^*\right\}}_{=0}\mathbb{E}\left\{e^{-j\arg\left(\hat{p}\right)}
e^{j\arg\left(\hat{Z}_n\right)}e^{j\tilde{c}_n}\right\}\nonumber\\
     &\stackrel{(b)}{=}\mathbb{E}\left\{\left|\hat{p}\right|\right\}\mathbb{E}\left\{\left|\hat{Z}_n\right|\right\}\mathbb{E}\left\{e^{j\tilde{c}_n}\right\} ,
\end{align}
where we used the independence of the channel estimation errors $\tilde{Z}_n$ and $\tilde{p}$ from each other and from $\hat{Z}_n$ and $\hat{p}$ in $(a)$. Since $c_n$, and thus $\tilde{c}_n$  is a function of $\hat{Z}_n$ and $\hat{p}$, the channel estimation errors $\tilde{Z}_n$ and $\tilde{p}$ are also independent of $\tilde{c}_n$. We also utilize that the channel estimation errors have zero mean. In $(b)$, we have used that the random variables $|\hat{p}|$, $|\hat{Z}_n|$, and $\tilde{c}_n$ are mutually independent. This follows from the fact that for the complex Gaussian random variable $\hat{p}$, $|\hat{p}|$ and $\arg\left(\hat{p}\right)$ are independent. Similarly, for the complex Gaussian random variable $\hat{Z}_n$, $|\hat{Z}_n|$ and $\arg(\hat{Z}_n)$ are independent. Furthermore,  $\tilde{c}_n$ is a function of $\arg(\hat{p})$ and $\arg(\hat{Z}_n)$, which leads to the independence of $\tilde{c}_n$ from $|\hat{p}|$ and $|\hat{Z}_n|$. Noting that $|\hat{p}|$ and $|\hat{Z}_n|$ are Rayleigh distributed and using \eqref{eq:p_hat}-\eqref{eq:z_hat}, we have 

\begin{align}
&\mathbb{E}\left\{\left|\hat{p}\right|\right\} = \frac{\sqrt{\pi}}{2}\sqrt{\frac{(N+1)P_{\mathrm{pilot}}\rho^2}{(N+1)P_{\mathrm{pilot}}\rho+\sigma^2}}\\
&\mathbb{E}\left\{\left|\hat{Z}_n\right|\right\} = \frac{\sqrt{\pi}}{2}\sqrt{\frac{(N+1)P_{\mathrm{pilot}}\left(\frac{\alpha\beta M}{N}\right)^2}{(N+1)P_{\mathrm{pilot}}\frac{\alpha\beta M}{N}+\sigma^2}}, \quad n=1,\ldots,N. \label{eq:exp-abs-Zhat}
\end{align}

Finally, recalling \eqref{eq:MGF} and arranging the terms, we obtain the closed-form expression of $\mathcal{A}$ as

\begin{align}
    \mathcal{A}=\frac{\pi}{4\sqrt{N}}\frac{\rho \alpha \beta M P_{\mathrm{pilot}}/\sigma^2 \sinc(1/K)}{\sqrt{\left(\rho P_{\mathrm{pilot}}/\sigma^2+\frac{1}{N+1}\right)\left(\alpha \beta M P_{\mathrm{pilot}}/\sigma^2+ \frac{N}{N+1}\right)}}. \label{eq:mathcalA}
\end{align}

\emph{2) Computation of $\mathcal{B}$}

Noting that $Z_n=\hat{Z}_n+\tilde{Z}_n$ and $\hat{c}_n= \arg(\hat{Z}_n)-\arg\left(\hat{p}\right)+\tilde{c}_n$, we express $\mathcal{B}$ for $n\neq m$ as

\begin{align*}
   \mathcal{B}= &\mathbb{E}\left\{Z_n Z_m^* e^{-j (\hat{c}_n - \hat{c}_m)}\right\} = \mathbb{E}\Big\{\left(\hat{Z}_n+\tilde{Z}_n\right)\left(\hat{Z}_m^*+\tilde{Z}_m^*\right) \nonumber \\
   &\hspace{14mm}\times e^{-j\arg\left(\hat{Z}_n\right)} e^{-j\tilde{c}_n}e^{j\arg\left(\hat{Z}_m\right)}e^{j\tilde{c}_m}\Big\} \nonumber \\
    =& \mathbb{E}\Big\{\left(\left|\hat{Z}_n\right|+\tilde{Z}_ne^{-j\arg\left(\hat{Z}_n\right)}\right) \nonumber \\
&\hspace{14mm}\times\left(\left|\hat{Z}_m\right|+\tilde{Z}_m^*e^{j\arg\left(\hat{Z}_m\right)}\right)e^{-j\tilde{c}_n}e^{j\tilde{c}_m}\Big\} \nonumber \\
    =&\mathbb{E}\left\{ \left|\hat{Z}_n\right|\left|\hat{Z}_m\right|e^{-j\tilde{c}_n}e^{j\tilde{c}_m}\right\} \nonumber \\
    &+\mathbb{E}\left\{\tilde{Z}_ne^{-j\arg\left(\hat{Z}_n\right)}\left|\hat{Z}_m\right|e^{-j\tilde{c}_n}e^{j\tilde{c}_m}\right\} \nonumber \\
    &+ \mathbb{E}\left\{\left|\hat{Z}_n\right|\tilde{Z}_m^*e^{j\arg\left(\hat{Z}_m\right)}e^{-j\tilde{c}_n}e^{j\tilde{c}_m}\right\} \nonumber \\
    &+\mathbb{E}\left\{\tilde{Z}_ne^{-j\arg\left(\hat{Z}_n\right)}\tilde{Z}_m^*e^{j\arg\left(\hat{Z}_m\right)}e^{-j\tilde{c}_n}e^{j\tilde{c}_m}\right\}\nonumber \\
    \end{align*}
    \begin{align}
    \stackrel{(a)}{=}& \mathbb{E}\left\{ \left|\hat{Z}_n\right|\left|\hat{Z}_m\right|e^{-j\tilde{c}_n}e^{j\tilde{c}_m}\right\} \nonumber \\
    &+\underbrace{\mathbb{E}\left\{\tilde{Z}_n\right\}}_{=0}\mathbb{E}\left\{e^{-j\arg\left(\hat{Z}_n\right)}\left|\hat{Z}_m\right|e^{-j\tilde{c}_n}e^{j\tilde{c}_m}\right\} \nonumber \\
    &+ \underbrace{\mathbb{E}\left\{\tilde{Z}_m^*\right\}}_{=0} \mathbb{E}\left\{\left|\hat{Z}_n\right|e^{j\arg\left(\hat{Z}_m\right)}e^{-j\tilde{c}_n}e^{j\tilde{c}_m}\right\}\nonumber \\
    &+\underbrace{\mathbb{E}\left\{\tilde{Z}_n\right\}}_{=0}\underbrace{\mathbb{E}\left\{\tilde{Z}_m^*\right\}}_{=0}\mathbb{E}\left\{e^{-j\arg\left(\hat{Z}_n\right)}e^{j\arg\left(\hat{Z}_m\right)}e^{-j\tilde{c}_n}e^{j\tilde{c}_m}\right\}\nonumber \\
     \stackrel{(b)}{=}&\mathbb{E}\left\{ \left|\hat{Z}_n\right|\right\}\mathbb{E}\left\{\left|\hat{Z}_m\right|\right\}\mathbb{E}\left\{e^{-j\tilde{c}_n}e^{j\tilde{c}_m}\right\} \nonumber \\
    \stackrel{(c)}{=}&\frac{\pi}{4}\frac{(N+1)P_{\mathrm{pilot}}\left(\frac{\alpha\beta M}{N}\right)^2}{(N+1)P_{\mathrm{pilot}}\frac{\alpha\beta M}{N}+\sigma^2} \mathbb{E}\left\{e^{-j\tilde{c}_n}e^{j\tilde{c}_m}\right\},
\end{align}

where we used the independence of the channel estimation errors $\tilde{Z}_n$ and $\tilde{Z}_m$ from each other for $n\neq m$ and from $\hat{Z}_n$ and $\tilde{c}_n$ in $(a)$. In $(b)$, we have used that the random variables  $|\hat{Z}_n|$ and $|\hat{Z}_m|$ for $n\neq m$ and $\tilde{c}_n$ are mutually independent. In $(c)$, we have inserted \eqref{eq:exp-abs-Zhat}. Obtaining the closed-form expression for the last expectation above requires some more steps. 

By the law of iterated expectation, we can write the last expectation as
\begin{align}
   \mathbb{E}\left\{e^{-j\tilde{c}_n}e^{j\tilde{c}_m}\right\}= \mathbb{E}\left\{\mathbb{E}\left\{e^{-j\tilde{c}_n}e^{j\tilde{c}_m} \big | \arg\left(\hat{p}\right) \right\}\right\} \label{eq:independence},
\end{align}
where the innermost expectation on the right-hand side is over the random variables $\tilde{c}_n$, whereas the outermost expectation is over the random variable $\arg\left(\hat{p}\right)$. Since $\tilde{c}_n$ and $\tilde{c}_m$ for $n\neq m$ are functions of independent random variables $\arg(\hat{Z}_n)$ and $\arg(\hat{Z}_m)$  given $\arg(\hat{p})$,  $\tilde{c}_n$ and $\tilde{c}_m$ are independent given $\arg\left(\hat{p}\right)$.  Therefore, \eqref{eq:independence} becomes
\begin{align}
   \mathbb{E}\left\{e^{-j\tilde{c}_n}e^{j\tilde{c}_m}\right\}=& \mathbb{E}\left\{\mathbb{E}\left\{e^{-j\tilde{c}_n}\big | \arg\left(\hat{p}\right)\right\}\mathbb{E}\left\{e^{j\tilde{c}_m} \big | \arg\left(\hat{p}\right) \right\}\right\} \nonumber \\
   =& \mathbb{E}\left\{\mathbb{E}\left\{e^{-j\tilde{c}_n}\right\}\mathbb{E}\left\{e^{j\tilde{c}_m}\right\}\right\} \nonumber \\
   =&\mathbb{E}\{\sinc^2(1/K)\} =\sinc^2(1/K)
   \label{eq:independence2}
\end{align}
due to the fact that $\tilde{c}_n|\arg\left(\hat{p}\right)\sim \mathcal{U}[-\pi/K,\pi/K)$ for any $\arg\left(\hat{p}\right)$. We also used the moment generating function from \eqref{eq:MGF}. Arranging the terms, we obtain the closed-form expression of $\mathcal{B}$ as 
\begin{align}
    \mathcal{B} = \frac{\pi}{4N}\frac{(\alpha \beta M)^2P_{\mathrm{pilot}}/\sigma^2 \sinc^2(1/K)}{\alpha \beta M P_{\mathrm{pilot}}/\sigma^2+\frac{N}{N+1}}. \label{eq:mathcalB}
\end{align}

Inserting $\mathcal{A}$ from \eqref{eq:mathcalA} and $\mathcal{B}$ from \eqref{eq:mathcalB} into \eqref{eq:result-lemma2}, we finally obtain \eqref{eq:lemma2}, which concludes the proof.
\end{IEEEproof}

%{}

Theorem~\ref{th:general} reveals how imperfect CSI and quantization lower the SNR. Note that the case of perfect CSI and perfect phase-shift resolution in \eqref{eq:exactSNR} is a special case of the result in \eqref{eq:lemma2} that is obtained from Theorem~\ref{th:general} as $K\rightarrow \infty$ and $P_{\mathrm{pilot}} \rightarrow \infty$. This means that increasing $P_{\mathrm{pilot}}$ and $K$ will increase the SNR up to a certain upper bound. It can also be noted that increasing the number of subarrays $N$ is beneficial to the SNR. This behavior is illustrated in Fig.~\ref{fig:bigformula} for a 1-bit RIS with the UE transmitting at $P_{\mathrm{data}}=100$\,mW at a direct path loss of $\rho=-110$\,dB and varying values of $P_{\mathrm{pilot}}$. The resulting average SNR is compared to the perfect CSI and quantization-free case given by \eqref{eq:exactSNR}. We see that the quantization degraded the signal quality by up to 2.6\,dB for individually configured subarrays. A thing to note is that quantization effects are present in two terms in \eqref{eq:lemma2} through the $\sinc$ functions and affect both the cross term and the RIS term. The relevant simulation parameters, path loss, bandwidth, RIS elements, and quantization states are found in Table \ref{tab:table1}. The curves in the figure are computed using the analytical expression in Theorem~\ref{th:general}, while the overlapping stars are obtained from 1000 Monte Carlo trials, which confirms the validity of the formula.

\begin{table}[h!]
  \begin{center}
    \caption{Simulation Parameters}
    \label{tab:table1}
    \begin{tabular}{|l|r|} % <-- Alignments: 1st column left, 2nd right and 3rd right, with vertical lines in between
     \hline
      \textbf{Parameter} & \textbf{Value} \\
      \hline
      RIS to BS path loss: $\alpha$ & $-60$\,dB \\
      UE to RIS path loss: $\beta$ & $-80$\,dB \\
      %\
      %\
      Data SNR target $\gamma_d$ & $20$\,dB  \\
      Quantization states $K$ & $2$ \\
      RIS elements: $M$ & $1024$  \\
      Bandwidth: $B$ & $100$\,MHz  \\
      UE passive circuit power: $P_{\mathrm{circuit}}$ & $10$\,mW  \\
      Complex Gaussian thermal receiver noise: $\sigma^2$ & $-123.9$\,dB \\
       \hline
    \end{tabular}
  \end{center}
\end{table}

\begin{figure}[t!] \centering
\begin{overpic}[width=.99\columnwidth]{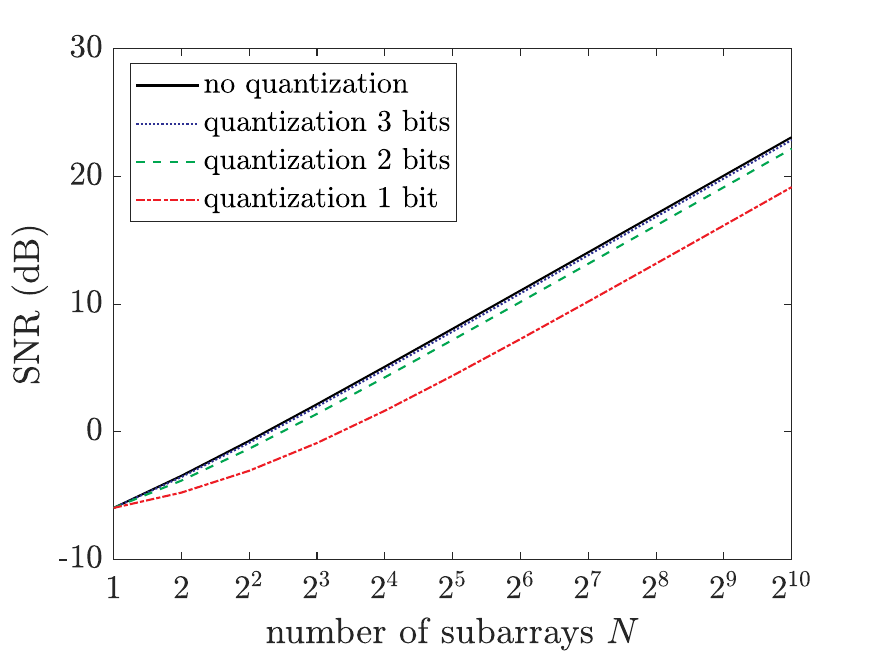} 
\put(91,56.5){\vector(0,-1){3}}
\put(91,55.5){\vector(0,1){3}}
\put(92,55.6){$3.9$ dB}
\end{overpic} 
\caption{Impact of quantization of order 1, 2, and 3 on the SNR versus $N$ for $\alpha=-80$\,dB, $\beta=-60$\,dB, and $\rho=0$ with $M=1024$ RIS elements being divided into subarray sizes of $N=2^R$ for $R\in \{ 0,1, \dotsc ,10\}$. The baseline of no quantization is also shown. Note the 3.9 dB performance gap due to quantization from Lemma~\ref{lem:upperbound-SNR-loss} in the top right corner.}  \label{fig:quantSNR}
\end{figure}

\begin{figure}[t!] \centering
\centerline{\includegraphics[trim=8 2 25 15,clip,width=0.99\columnwidth]{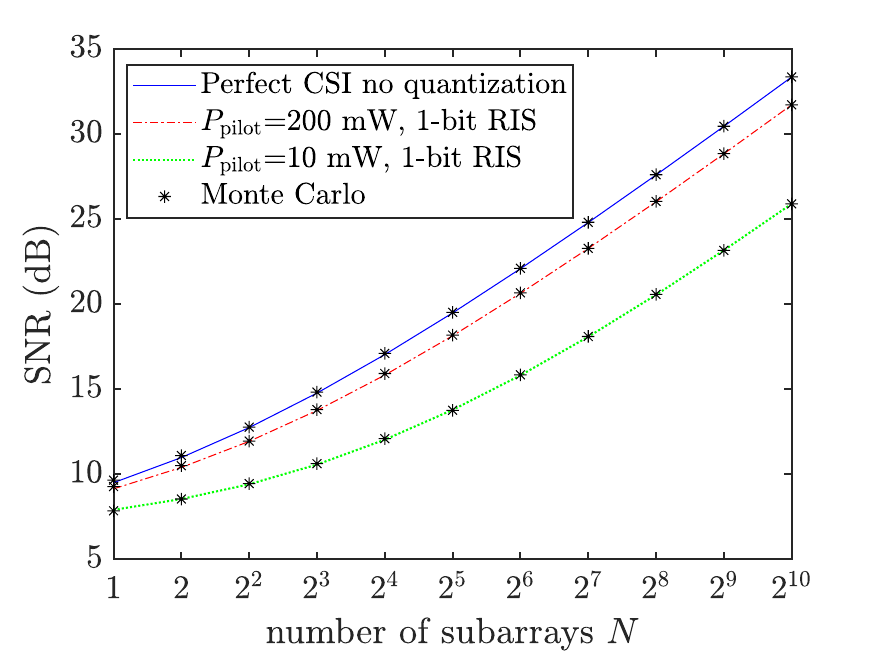}}  
\caption{The average SNR versus $N$ with $M=1024$ RIS elements being divided into $N=2^R$ subarrays for $R \in \{ 0,1, \dotsc ,10\}$. Different pilot powers for a 1-bit RIS are compared to the ideal perfect CSI and quantization-free case. {\color{black}The black stars are computed as averages from 10\,000 Monte Carlo simulations.}
} \label{fig:bigformula}
\end{figure}
% h

\subsection{Special case: Perfect CSI with quantized RIS phase-shift resolution} 

To showcase the fundamental impact of having quantized phase-shift resolution, we will examine the SNR with perfect CSI and no direct path. The resulting SNR is obtained from Theorem~\ref{th:general} by inserting $\rho=0$ into \eqref{eq:lemma2} and letting $P_{\mathrm{pilot}} \rightarrow \infty$:
\begin{equation} \label{eq:SNR2}
%\mt
%p2
\mathbb{E} \left\{\overline{\mathrm{SNR}}\right\}= \frac{P_\mathrm{data}}{\sigma^2}\alpha \beta M \left(1+ \frac{\pi}{4}(N-1) \sinc^2(1/K)\right).
\end{equation}
We observe that as the quantization resolution decreases, i.e., $K$ decreases, the average SNR reduces proportionally. On other hand, when $K\rightarrow \infty$, $\sinc^2(1/K) \rightarrow 1$ and the average SNR converges to \eqref{eq:exactSNR} with $\rho=0$, which corresponds to perfect CSI and perfect RIS phase-shift resolution. 

Using the above result, the following lemma provides an upper bound on the average SNR loss for the general case of imperfect CSI and finite phase-shift quantization resolution with or without a direct path.

%I

\begin{lemma} \label{lem:upperbound-SNR-loss}
%Th
The relative average SNR loss due to $b$-bit phase-shift quantization is upper bounded by $\sinc^2(1/K)$, where $K=2^b$ denotes the number of equally-spaced quantization states on the unit circle.
\end{lemma}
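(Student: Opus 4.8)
The plan is to isolate the dependence of the average SNR on the quantization resolution and then compare it, at a fixed CSI quality, against the unquantized case. Writing $s \triangleq \sinc(1/K)$, I would first observe that the expression in \eqref{eq:lemma2} is a polynomial of degree two in $s$,
\begin{equation*}
\mathbb{E}\left\{\overline{\mathrm{SNR}}\right\} = \frac{P_\mathrm{data}}{\sigma^2}\left(a + A\,s + B\,s^2\right),
\end{equation*}
where $a = \rho + \alpha\beta M$ is the quantization-independent aperture term, $A$ is the coefficient of the cross term (which carries a single factor of $\sinc$), and $B$ is the coefficient of the coherent beamforming term (which carries $\sinc^2$). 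The structural facts I would rely on are that all three coefficients are nonnegative---$a\geq 0$ trivially, $A\geq 0$ since it is $\frac{\pi}{2}$ times a ratio of nonnegative quantities, and $B\geq 0$ since $1-\frac{1}{N}\geq 0$ for every $N\geq 1$---and that $0 < s \leq 1$ for each $K=2^b\geq 2$, with $s\to 1$ only as $K\to\infty$.

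Next I would fix the CSI quality (i.e., hold $P_\mathrm{pilot}$ fixed) and define the no-quantization reference SNR by letting $K\to\infty$, which sets $s=1$ and gives $\mathbb{E}\{\overline{\mathrm{SNR}}\}\big|_{K\to\infty} = \frac{P_\mathrm{data}}{\sigma^2}(a+A+B)$. The relative SNR retained after quantization is then the ratio
\begin{equation*}
r(K) = \frac{a + A\,s + B\,s^2}{a + A + B},
\end{equation*}
and the lemma reduces to showing $r(K)\geq s^2 = \sinc^2(1/K)$, i.e., the quantized SNR is at least $\sinc^2(1/K)$ times the unquantized SNR, so quantization can scale the SNR down by no worse than this factor.

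I would establish this inequality directly. Since $a+A+B>0$, the claim $r(K)\geq s^2$ is equivalent to $a + A\,s \geq s^2(a+A)$, i.e., to
\begin{equation*}
a\,(1 - s^2) + A\,s\,(1 - s) \geq 0,
\end{equation*}
and both summands are nonnegative because $a,A\geq 0$ and $0<s\leq 1$ yield $1-s^2\geq 0$ and $1-s\geq 0$. This proves the bound. I would then note it is tight: as $N$ grows, the beamforming coefficient $B$ dominates both $a$ and $A$, so $r(K)\to s^2$, recovering exactly the factor $\sinc^2(1/K)$ visible in the special case \eqref{eq:SNR2} and matching the $3.9$\,dB worst-case gap for $K=2$ in Fig.~\ref{fig:quantSNR}.

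I do not expect a serious obstacle, since the argument is purely algebraic once the SNR is organized by powers of $\sinc(1/K)$. The only step demanding care is the bookkeeping that certifies the nonnegativity of the cross-term and beamforming coefficients $A$ and $B$ for all admissible $N$ and $P_\mathrm{pilot}$, together with confirming $\sinc(1/K)\in(0,1]$; these are precisely the properties that make the single inequality $a(1-s^2)+A\,s(1-s)\geq 0$ go through, and hence carry the entire proof.
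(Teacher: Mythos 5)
Your proof is correct, but it takes a genuinely different route from the paper's. The paper argues by a worst-case reduction: it first asserts (via informal monotonicity reasoning) that the quantization-attributable degradation is greatest when the CSI is perfect ($P_\mathrm{pilot}\to\infty$) and the direct path is absent ($\rho=0$), and then verifies the bound only in that special case, where \eqref{eq:lemma2} collapses to \eqref{eq:SNR2} and the key step is $1+\frac{\pi}{4}(N-1)\sinc^2(1/K) > \sinc^2(1/K)\left(1+\frac{\pi}{4}(N-1)\right)$. You instead prove the bound uniformly: holding the pilot power fixed, you organize \eqref{eq:lemma2} as a quadratic $a+As+Bs^2$ in $s=\sinc(1/K)$ and show $a+As+Bs^2 \geq s^2(a+A+B)$ directly from $a,A,B\geq 0$ and $0<s\leq 1$. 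The algebraic kernel is the same in both proofs --- it is the $s$-independent aperture term ($\rho+\alpha\beta M$, or $\alpha\beta M$ in the paper's reduced case) that keeps the ratio above $s^2$ --- but your version also handles the cross term $A$, which the paper eliminates by setting $\rho=0$. What your route buys is rigor and generality: the paper's claim that the worst case occurs at perfect CSI with no direct path is never actually proved, whereas your argument makes it unnecessary (and establishes it as a byproduct, since the bound holds at every CSI level and every $\rho$). What the paper's route buys is brevity and interpretation: it pinpoints the regime where the loss is essentially attained, tying the lemma to \eqref{eq:SNR2} and the $3.9$\,dB gap in the figure --- a point your tightness remark (that $r(K)\to s^2$ as $B$ dominates for large $N$ with good CSI) recovers, correctly, as a closing observation. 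Both proofs also use the same baseline, the unquantized SNR at identical CSI quality, which is the reading of ``loss due to quantization'' that the lemma intends.
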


\begin{IEEEproof}
We note that the average SNR in \eqref{eq:lemma2} is a monotonically increasing function of the pilot power $P_{\mathrm{pilot}}$. Without CSI, i.e., $P_{\mathrm{pilot}}=0$, quantization cannot degrade the performance.  The contributions to the average SNR come from the direct path and the RIS-assisted path. Quantization effects only degrade the performance of the RIS-assisted path, as can be seen by inserting $\alpha=\beta=0$ into \eqref{eq:lemma2}. In this case, the average SNR becomes independent of the quantization resolution $K=2^b$. The greatest degradation thus occurs if the RIS has perfect CSI and when the direct path contribution is zero, i.e., $\rho=0$. In this case, the average SNR is given by \eqref{eq:SNR2} and can be lower bounded as
\begin{align}
\mathbb{E} \left\{\overline{\mathrm{SNR}}\right\}=&\frac{P_\mathrm{data}}{\sigma^2}\alpha \beta M \left(1+ \frac{\pi}{4}(N-1) \sinc^2(1/K)\right) \nonumber \\
&>\frac{P_\mathrm{data}}{\sigma^2}\alpha \beta M \sinc^2(1/K)\left(1+ \frac{\pi}{4}(N-1) \right).
\end{align}
The lower bound equals the average SNR for the case of perfect CSI and  perfect phase-shift resolution with $\rho=0$ from \eqref{eq:exactSNR} multiplied by $\sinc^2(1/K)$, thus the relative SNR loss compared to the ideal case cannot be larger than $\sinc^2(1/K)$. 
\end{IEEEproof}

The impact of quantization and the upper bound from Lemma~\ref{lem:upperbound-SNR-loss} is illustrated in Fig.~\ref{fig:quantSNR}. Note that $\sinc^2(1/2)=-3.9$\,dB for the 1-bit quantized RIS, which represents the uttermost average SNR loss in logarithmic scale. The SNR loss is only $-0.9$\,dB when using a 2-bit RIS and $-0.2$\,dB with a 3-bit RIS, which explains why those performance gaps are barely visible in the figure. The performance loss due to the
$\sinc^2(1/K)$ function matches the results in \cite{BITSMEMS,wu2019beamforming} and is an RIS equivalent to the ``quantization lobe'' phenomena described in \cite{hansen2009phased} for phased arrays.

{\color{black}

\subsection{Multi-antenna setups}
The RIS-assisted single-input single-output (SISO) model considered in this paper can also be extended to system models involving multiple antennas at the BS. We will outline the extension in this section, for the case without a direct path, which we previously noted is the case when the RIS is particularly useful. We consider a BS equipped with $I$ antennas in a uniform linear array (ULA) serving a single-antenna UE. Because the BS and RIS are assumed to be deployed in LoS of each other, the received signal $\vect{y}=[y_1, \dots, y_I]^{\Ttran}$ at the $I$ antennas is given similarly to \eqref{eq:received-signal2} as
\begin{equation}
    \label{multi-extension}
    \vect{y}= %\vect{p}+
    \vect{a}\left(\sum_{n=1}^{N} Z_{n} e^{-j c_n} \right)x+\vect{w},
\end{equation}
where %$\vec
$\vect{a}$ is the array response vector of a ULA \cite[Ch.~7]{massivemimobook}:
\begin{equation}
    \vect{a}=\begin{bmatrix} 1 \\
    e^{j2\pi\Delta \sin(\phi) \cos(\theta)} \\
    \vdots \\ 
    e^{j2\pi(I-1)\Delta \sin(\phi) \cos(\theta)}  \end{bmatrix}.
\end{equation}
The azimuth angle $\phi$ and elevation angle $\theta$ depend on the geometric relation between the BS and the RIS, $\Delta$ is the inter-antenna distance in the ULA in terms of wavelength, and $\vect{w}$ is the noise vector with i.i.d. $ \CN (0,\sigma^2)$ entries. 
The SNR of the received signal in \eqref{multi-extension} is maximized by 
%We noti
maximum ratio (MR) combining, i.e., taking an inner product with $\frac{\vect{a}}{\|\vect{a}\|}$ \cite{massivemimobook}. The resulting processed received signal is 
\begin{equation}
    \label{multi-extension2}
    \frac{\vect{a}^{\Htran}}{\|\vect{a}\|}\vect{y}=\sqrt{I}\left(\sum_{n=1}^{N} Z_{n} e^{-j c_n} \right)x+\tilde{w},
\end{equation}
where $\tilde{w}=\frac{\vect{a}^{\Htran}}{\|\vect{a}\|}\vect{w}$ has the same distribution as each entry of $\vect{w}$. We notice that the magnitude of the term within parentheses is maximized by setting $c_n=\arg(Z_n)$. This results in the average SNR value 
\begin{align}
\label{SIMO-SNR}
&\mathbb{E}\left\{  \overline{\mathrm{SNR}}\right\} =I\frac{P_\mathrm{data}}{\sigma^2}\Bigg(\alpha \beta M \nonumber \\
&+ \frac{\pi}{4}\left(1-\frac{1}{N}\right)\frac{(\alpha \beta M)^2P_{\mathrm{pilot}}/\sigma^2 \sinc^2(1/K)}{\frac{\alpha \beta M}{N} P_{\mathrm{pilot}}/\sigma^2+\frac{1}{N+1}}\Bigg).
    \end{align}
This equation is the same as \eqref{eq:lemma2}, except for an added gain factor $I$ (the receive beamforming gain) and without a direct path since $\rho=0$. Therefore, whatever methods we develop to minimize the energy consumption for the model in \eqref{eq:lemma2} can easily be applied to the exemplified multi-antenna setups by including the scaling factor $I$. It is also worth mentioning that the same conclusion will be inferred after doing the analysis for the BS array geometries other than ULA. 

In more complex multi-antenna setups, there is no closed-form expression for the optimal RIS configuration. It is then impossible to compute the average SNR in closed form or to minimize the energy consumption analytically. However, the general methodology provided in the remainder of this paper can be easily followed to identify the preferred operation (e.g., power allocation and subarray sizes) numerically.
}

\section{Minimizing the Energy Consumption}

We consider the practical scenario where the UE wants to transmit a data payload with finite-sized information content {\color{black}over a narrowband channel with imperfect CSI.}
This payload data is mapped to $L$ complex modulation symbols, based on some joint modulation and coding scheme, which requires an average SNR of $\gamma_d$ for successful decoding (with some desirably low error probability). The value of $L$ greatly depends on how many bits the BS has scheduled for the UE together with the modulation and coding scheme. %It could just be a f
As a reference, in 5G OFDM-based systems, if a UE is scheduled on a single resource block during one time slot, it would transmit $12 \cdot 14 = 168$ symbols. If the UE is scheduled on the whole carrier, consisting of 273 resource blocks in the mid-band, during one time slot it would transmit $ 12 \cdot 273 \cdot 14 = 45\,864$ symbols. Depending on the channel coherence time, the pilot measurement may be valid for multiple time slots \cite{dahlman20205g}. This leads to the possibility of $L$ attaining values in the order of $L=200$ to $L=50\,000$ or higher. We will consider cases of small and large payloads herein, but treat both $L$ and $\gamma_d$ as given parameters in the problem formulation.

We want to minimize the energy consumption at the UE while ensuring successful decoding at the BS.
The energy consumption $E(N,P_\mathrm{pilot})$ of transmitting the $N+1$ pilots during channel estimation (with $N$ subarrays) and the $L$ data symbols is

\begin{equation} \label{eq:energy-consumption1}
    E(N,P_\mathrm{pilot})= (N+1)\frac{P_\mathrm{pilot}}{B}+L \frac{P_\mathrm{data}}{B}+(L+N+1)\frac{P_{\mathrm{circuit}}}{B},
\end{equation}
where $B$ is the symbol rate measured in symbol/s.

To optimize the battery life on the UE side, we should also include the power dissipation required to keep the device in active mode. After all, a longer transmission time will consume more power. We therefore included a term proportional to $P_{\mathrm{circuit}}$ in \eqref{eq:energy-consumption1}, which represents the energy consumption of the analog circuits. Note that we have omitted the energy consumption due to digital computations since all the complex operations are made at the BS side.

Since an average SNR of $\gamma_d$ is needed for the data transmission, we obtain from  \eqref{eq:lemma2} the required transmission power as shown in \eqref{eq:Pdata} at the top of the next page.
\begin{figure*}
\begin{equation} 
P_\mathrm{data}= \frac{\sigma^2 \gamma_d}{\rho+\alpha \beta M + \frac{\pi}{2}\frac{\rho \alpha \beta M P_{\mathrm{pilot}}/\sigma^2 \sinc(1/K)}{\sqrt{\left(\rho P_{\mathrm{pilot}}/\sigma^2+\frac{1}{N+1}\right)\left(\frac{\alpha \beta M}{N} P_{\mathrm{pilot}}/\sigma^2+ \frac{1}{N+1}\right)}}
+\frac{\pi}{4}\left(1-\frac{1}{N}\right)\frac{(\alpha \beta M)^2P_{\mathrm{pilot}}/\sigma^2 \sinc^2(1/K)}{\frac{\alpha \beta M}{N} P_{\mathrm{pilot}}/\sigma^2+\frac{1}{N+1}}
}. \label{eq:Pdata}
\end{equation}
\hrulefill
\end{figure*}

\subsection{Problem formulation}

To determine the subarray size $N$ along with the pilot power $P_\mathrm{pilot}$ that minimize the energy consumption, we solve the following optimization problem:

\begin{equation} \label{eq:optimization}
\begin{aligned}
 \underset{N, P_\mathrm{pilot}}{\textrm{minimize}} \quad & E(N,  P_\mathrm{pilot})\\
\textrm{subject to} \quad & N\in \{1,\ldots,M\}, \quad \frac{M}{N} \in \mathbb{Z}, \quad P_\mathrm{pilot}\geq 0  .
\end{aligned}
\end{equation}
 The optimization variables are the subarray size $N$ and the pilot transmit power $P_\mathrm{pilot}$. It is only when $1\leq N^{\star}\leq M$, however, that the solution $N^{\star}$ is viable. When all subarrays consist of the same number of elements then $M/N$ must also be an integer. {\color{black}The optimization is based on the long-term channel statistics (i.e., $\alpha, \beta, \rho$) since the exact channel realizations are estimated only after the pilots have been transmitted.} The total number of RIS elements, $M$, the number of data symbols $L$, and the average SNR requirement for the data transmission, $\gamma_d$, are given parameters. The required data power $P_{\mathrm{data}}$ is determined in terms of optimization variables and fixed parameters according to \eqref{eq:Pdata}.

\subsection{Numerical optimization}
To optimize the energy consumption for a given number of information symbols, one needs to find the solution $(N^{\star},P_{\mathrm{pilot}}^{\star})$ to \eqref{eq:optimization} that minimizes the energy consumption. The simulation parameters are given for two realistic different scenarios. In the first scenario, the UE has a small payload to transmit and a weak direct channel to the BS. In the second scenario, the UE has a large payload and a strong direct channel to the BS. 
%{\
The energy consumption for these two scenarios is illustrated Fig.~\ref{fig:3D}(a) and Fig.~\ref{fig:3D}(b), respectively, as a function of $N$ and $P_\mathrm{pilot}$. The corresponding simulation parameters are given in Table~\ref{tab:table1}. The surfaces in both figures are smooth for a wide variety of input parameters and thus we may utilize an alternating optimization algorithm, which optimizes one variable at the time, similar to the one utilized in \cite{bjornson2015optimal} to find the optimal solution. In Fig.~\ref{fig:3D} we employed a gradient steepest descent algorithm, which due to the smoothness of the function converged to the global optimum within a few iterations. The results are shown in black for 5, 10 and 15 iterations.

\begin{figure*}
\centering
\begin{subfigure}{.49\textwidth}
  \centering
  \includegraphics[width=.9\linewidth]{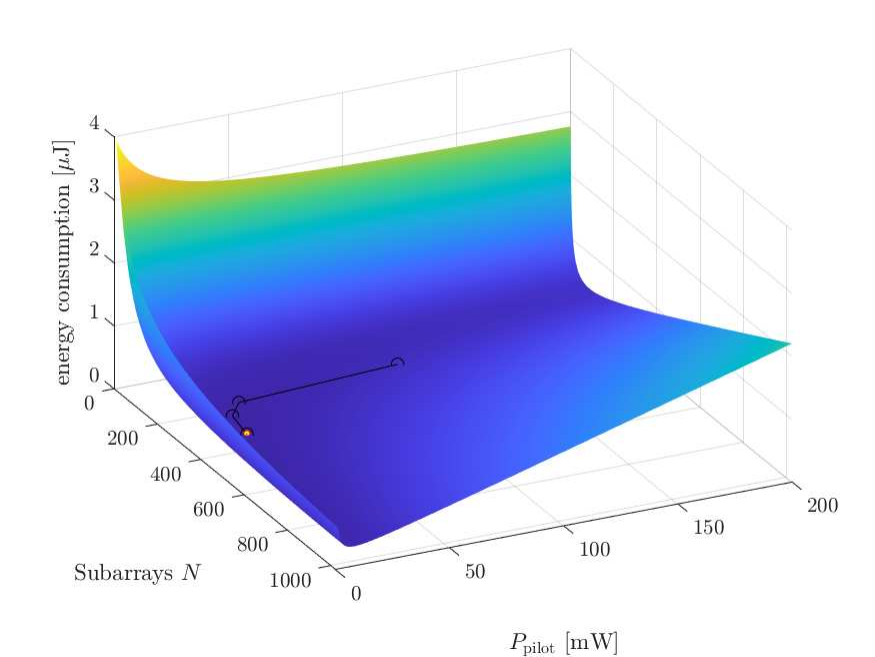}
  \caption{}
  \label{fig:m110L200}
\end{subfigure}%
\begin{subfigure}{.49\textwidth}
  \centering
  \includegraphics[width=.9\linewidth]{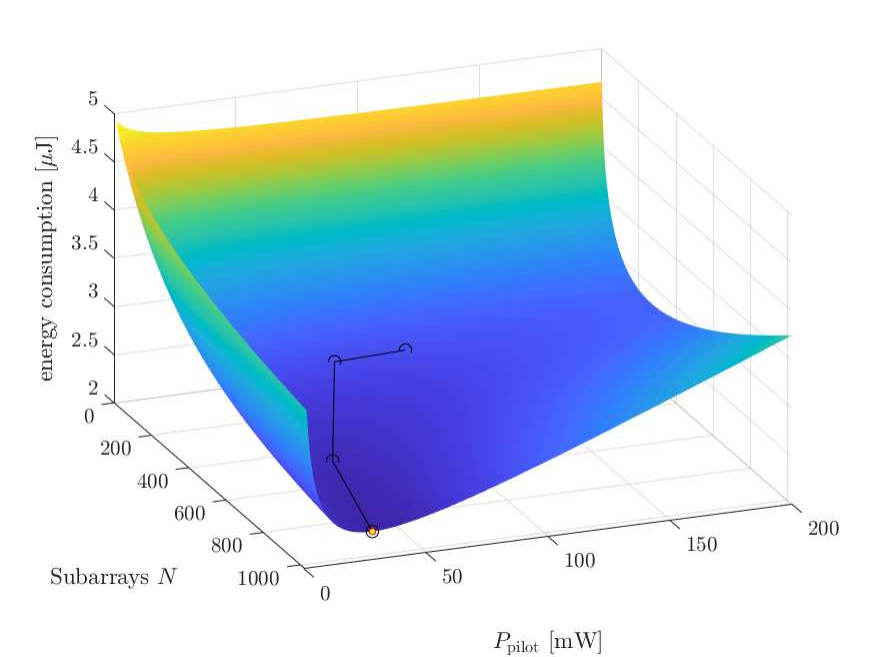}
  \caption{}
  \label{fig:m90L10000}
\end{subfigure}
\caption{Energy consumption $E(N,P_{\mathrm{pilot}})$ {\color{black} in a scenario with imperfect CSI} when transmitting $L=200$ symbols and the direct path is weak ($\rho=-110$\,dB) in (a) and when transmitting $L=10\,000$ symbols and the direct path is strong ($\rho=-90$\,dB) in (b). The results  obtained by the numerical optimization based on steepest descent are shown in black for 5, 10, and 15 iterations.}
\label{fig:3D}
\end{figure*}

As can be seen, Fig.~\ref{fig:3D}(a) shows that there is a global optimal solution at $N^{\star}=403$ and $P_{\mathrm{pilot}}^{\star}=19\,\mathrm{mW}$ when transmitting 200 symbols and the direct path is weak. However, $N^{\star}$ will have to be rounded to be a power of 2. In this case, 256 is a better solution than 512. Once the optimal subarray size has been determined, the pilot power should consequently also be fine-tuned through a gradient-based method to bring the energy consumption down, in this case to $20\,\mathrm{mW}$. For the scenario in Fig.~\ref{fig:3D}(b) when transmitting 10\,000 symbols and the direct path is strong, $N^{\star}=1024$ and $P_{\mathrm{pilot}}^{\star}=28\,\mathrm{mW}$ are obtained.

%{\co

\subsection{Payload length}

Since practical system will transmit payloads of variable length $L$ depending on the UE's traffic, there is no one-size-fits-all solution but different solutions to \eqref{eq:optimization} are needed by different UEs and at different times.
Hence, it is important to determine how the optimal energy consumption solution parameters $(N^\star, P_\mathrm{pilot}^\star)$ behave together. We have already established that increasing either will also increase the received SNR at the cost of higher energy consumption. For a varying payload length, the energy consumption optimization problem \eqref{eq:optimization} was solved. The optimal solutions are shown in Figs.~\ref{fig:NiceImage}(a) and \ref{fig:NiceImage}(c). Also shown is the required data power $P_\mathrm{data}$ in Fig.~\ref{fig:NiceImage}(b) and pilot energy in Fig.~\ref{fig:NiceImage}(d).

\begin{figure*}
\centering
\begin{subfigure}[b]{0.475\textwidth}
\centering
   \includegraphics[width=.9\textwidth]{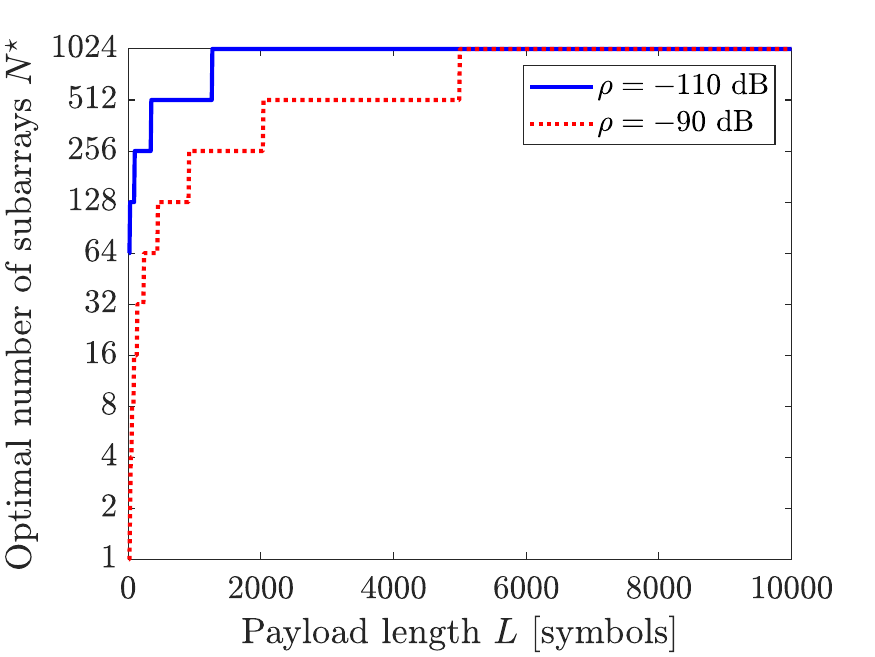}
   \caption{}
   \label{fig:LvsN}
\end{subfigure}
\hfill
\begin{subfigure}[b]{0.475\textwidth}
\centering
   \includegraphics[width=.9\textwidth]{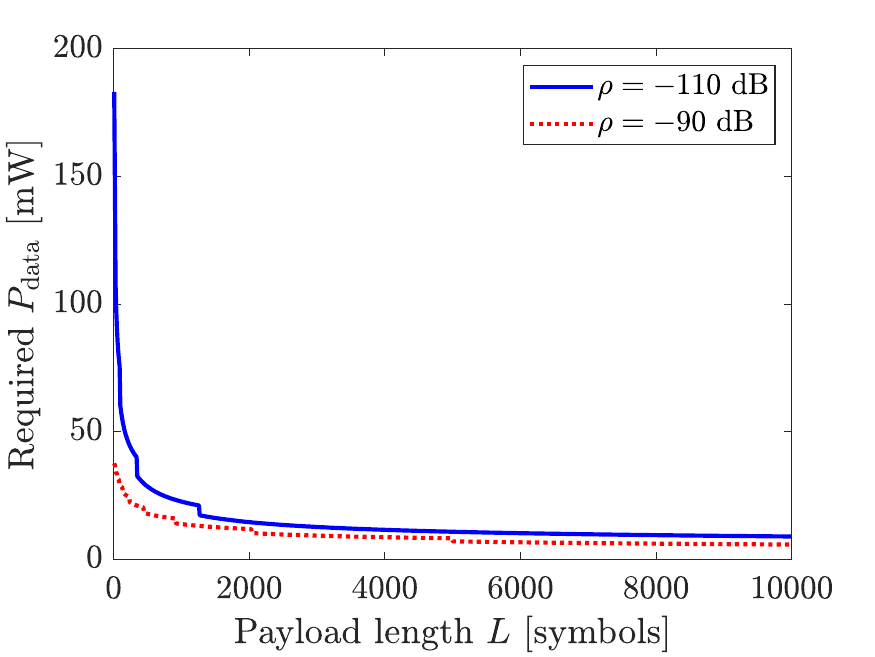}
   \caption{}
   \label{fig:LvsD}
\end{subfigure}
\hfill
\begin{subfigure}[b]{0.475\textwidth}
\centering
   \includegraphics[width=.9\textwidth]{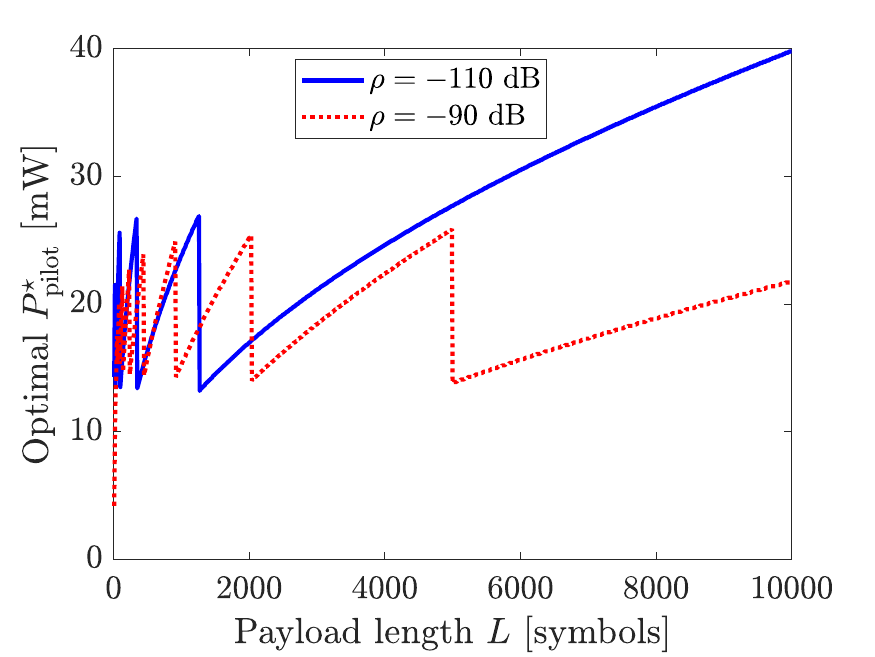}
   \caption{}
   \label{fig:LvsP}
\end{subfigure}
\hfill
\begin{subfigure}[b]{0.475\textwidth}
\centering
   \includegraphics[width=.9\textwidth]{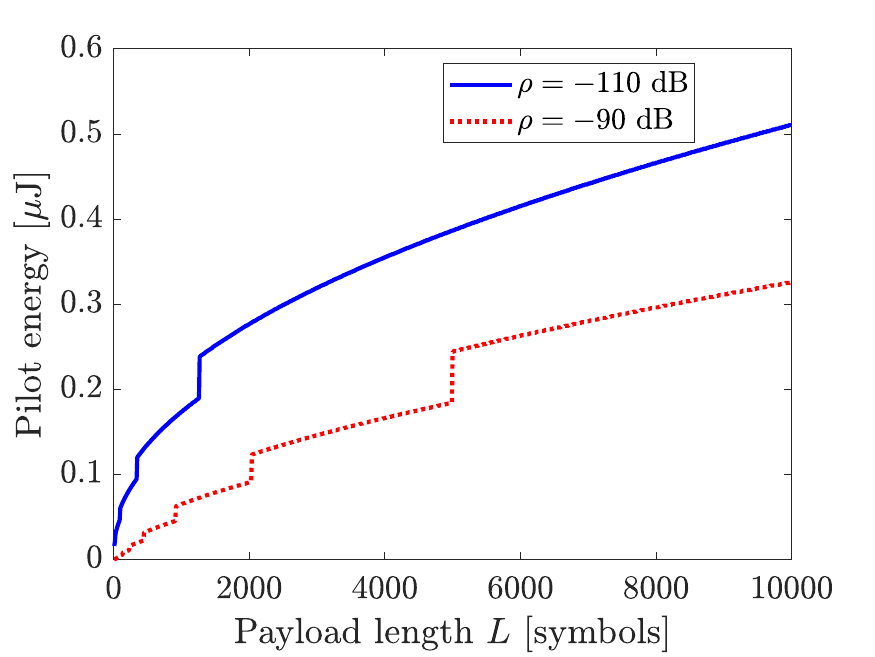}
   \caption{}
   \label{fig:LvsE}
\end{subfigure}
\hfill
\begin{subfigure}[b]{0.475\textwidth}
\centering
   \includegraphics[width=.9\textwidth]{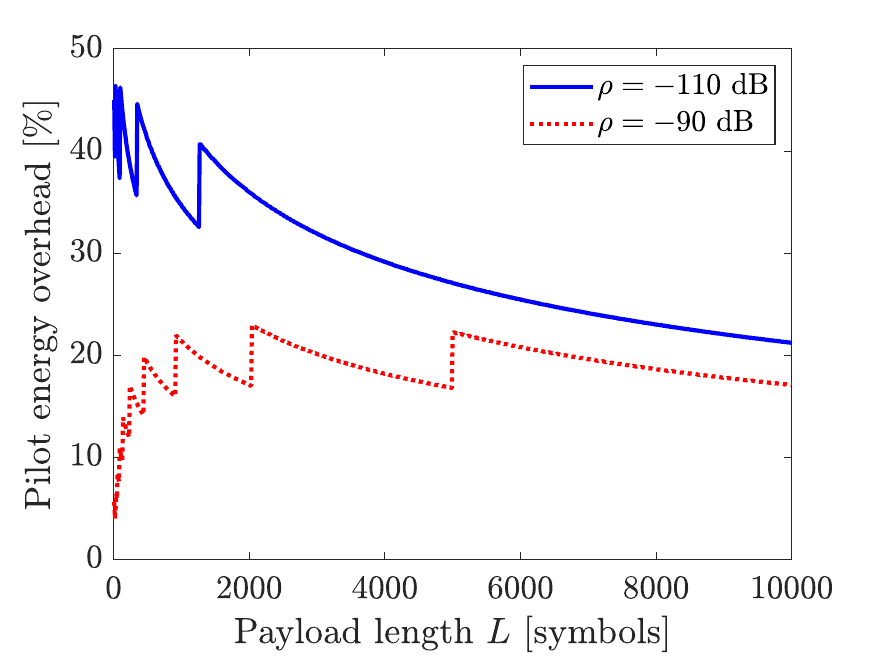}
   \caption{}
   \label{fig:LvsOH}
\end{subfigure}

\caption[Three figures]{The optimal energy consumption minimizing subarray size (a) and pilot power (c), required data power (b), total pilot energy (d), {\color{black}and the percentage of energy spent on the pilot signaling (e)}, for different payload lengths are shown for a weak ($\rho=-110$\,dB) and strong ($\rho=-90$\,dB) direct path between the UE and BS.}
\label{fig:NiceImage}
\end{figure*}

Fig.~\ref{fig:NiceImage}(a) is a step-function, which shows that as the payload grows, it becomes more efficient to use more subarrays (which consequently are smaller). The red curve being under the blue indicates that the number of subarrays is higher in the case when the direct path loss $\rho$ is high, which means that the RIS is more useful in that case. The reason for this is that, the larger the payload is, the more we can benefit from reducing the data power by configuring the RIS on a per-element basis. By contrast, clustering the elements into subarrays is an effective strategy to bring the energy consumption down when the payload is sufficiently small (depending on the relative path loss as can be seen in Fig.~\ref{fig:NiceImage}(a)).

Fig.~\ref{fig:NiceImage}(c) displays a saw-tooth behaviour. {\color{black}As the payload grows larger, it becomes beneficial to increase the pilot power to get a more accurate channel estimation which mitigates the effects of imperfect CSI, and gives a more accurate RIS configuration}. The sudden drops in Fig.~\ref{fig:NiceImage}(c) coincide with increases in the number of subarrays in Fig.~\ref{fig:NiceImage}(a). As the payload $L$ grows when passing through the values corresponding to these jumps, it becomes better to reduce the pilot power but increase the number of subarrays. 

As $(N^\star, P_\mathrm{pilot}^\star)$ vary together, it is also of interest to examine how the total energy spent on pilot signals behaves. This is given by the terms $(N+1)\frac{P_\mathrm{pilot}+P_\mathrm{circuit}}{B}$ in \eqref{eq:energy-consumption1} and shown in Fig.~\ref{fig:NiceImage}(d). % Th
In general, the energy is steadily increasing as the payload grows. However, there are rapid increases that occur when transmitting with more subarrays. This is because longer pilot signals result in an increased circuit energy consumption.
%%% con
{\color{black}Fig.~\ref{fig:NiceImage}(e) reveals that the percentage of energy spent on pilot signaling makes a substantial contribution to the overall energy consumption which should not be ignored, irrespective if subarrays are used or not. It also shows that if the direct path is strong ($\rho = -90$\,dB), it is best to spend relatively little energy on configuring the RIS for small payloads. But if the direct path is weak ($\rho = -110$\,dB), the overhead can be in the order of $40\%$. For larger payloads in the order of $10\,000$ symbols it makes up around $20\%$ of the total energy consumption, but steadily decreases as the payloads grow larger.}

We have observed that as the payload increases, the number of subarrays and/or pilot power increase.
In Fig.~\ref{fig:NiceImage}(b), we show the impact on the transmit power for data transmission. We notice that for larger payloads, the SNR target $\gamma_d$ can be reached with a lower transmit power $P_\mathrm{data}$ from the UE. This follows from the dependence of $P_\mathrm{data}$  on $P_\mathrm{pilot}$ and $N$ as given by \eqref{eq:Pdata}.

The best strategy for minimizing the energy consumption in scenarios with larger payloads (over $1\,000$ symbols long in Fig.~\ref{fig:NiceImage}(a) when the direct path is weak or $5\,000$ symbols when the direct path is strong) is to configure the RIS elements individually and set the pilot power to an appropriate level. The transmitted pilot power levels from $10$ to $40$\,mW in Fig.~\ref{fig:NiceImage}(c) can be translated to pilot SNR levels of approximately between $-5$ to $0$\,dB per subarray using the second line of \eqref{eq:SNR_pilot}. This means that a low-resolution estimation of the channel to each individual RIS element is generally preferable to a high-resolution estimation of a larger subarray from an energy consumption perspective.

\subsection{Special case study: Perfect CSI and phase-shift resolution} 

The general optimization problem \eqref{eq:optimization} is too complex for a precise mathematical analysis, but we can obtain analytical insights by considering the energy consumption with perfect CSI together with infinite RIS phase-shift resolution.
We can then also make use of the lower bound since its behavior is similar to the exact average SNR (as shown in Fig.~\ref{fig:ESNR}) but the expression is simpler. For notational convenience,  we also neglect the circuit power. By perfect CSI, we mean that the RIS can be configured so that the losses due to the suboptimal phase-shift configuration of the subarrays are negligible.

Hence, we return to the tight lower SNR bound  $\overline{\mathrm{SNR}}_{\rm av,low}$ in (\ref{eq:Jensen}). We may assume that a pilot SNR of $\gamma_p$ defined as the second line in \eqref{eq:SNR_pilot} is needed for perfect channel estimation. Earlier simulations have shown that values of $20$\,dB or higher give negligible losses \cite{enqvist2022optimizing}. If we insert $\gamma_p$ into \eqref{eq:energy-consumption1}, the energy consumption becomes
\begin{equation}
    E(N)=\frac{\sigma^2 \gamma_p N}{B \alpha \beta M}
    %+(
    +\frac{\sigma^2 \gamma_d L}{B}\frac{1}{\frac{\pi}{4} \left(\sqrt{\rho}+\sqrt{\alpha \beta M N}  \right)^2},
\end{equation}
which is no longer a function of $P_\mathrm{pilot}$ and whose first-order derivative with respect to $N$ is
\begin{equation} \label{eq:first-derivative}
    %E'
    E'(N)=\frac{\sigma^2 \gamma_p}{B\alpha \beta M} 
    %+\
    -\frac{\sigma^2 \gamma_d L}{B}
    \frac{\sqrt{\alpha \beta M}}{\frac{\pi}{4}\left(\sqrt{\rho}+\sqrt{\alpha \beta M N} \right)^3\sqrt{N}},
\end{equation}
while the second-order derivative is
\begin{equation}
    %E'
    E''(N)=\frac{\sigma^2 \gamma_d L}{B}\frac{\sqrt{\alpha \beta M}(\sqrt{\rho}+4\sqrt{\alpha \beta M N})}{\frac{\pi}{2} (\sqrt{\rho}+\sqrt{\alpha \beta M N})^4N^{3/2}}.
\end{equation}
We first note that $E''(N)>0$ for every $N>0$ meaning that the function is convex. The same can be shown for $E(N)$ when the exact SNR expression is used instead of the bound. %this is left as an exercise to the reader.
Consequently, a unique positive real-valued solution $N^{\star}$ to the equation $E'(N^{\star})=0$ that minimizes $E(N)$ exists.

The following lemma formalizes this result and proposes a way to find the optimal integer $N$ to the problem in \eqref{eq:optimization} in the special case considered in this section.
%Onc

\begin{lemma} \label{lem2}
The optimal real-valued solution $N^{\star}$ that minimizes $E(N)$ exists and is obtained by finding the unique positive real-valued root of $E^{\prime}(N)=0$ via a bisection search. When $N$ is restricted to be a positive integer such that also $M/N$ is an integer, the optimal $N^{\star}$ can be found by evaluating $E(N)$ for the closest smaller and larger feasible values to the root and selecting the one giving the minimum $E(N)$.
 \end{lemma}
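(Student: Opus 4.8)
The plan is to build everything on the strict convexity of $E(N)$, which follows from the already-established fact that $E''(N) > 0$ for all $N > 0$. Strict convexity means $E'(N)$ is continuous and strictly increasing on $(0,\infty)$, so it can vanish at most once. To show it vanishes exactly once, I would examine the sign of $E'$ in \eqref{eq:first-derivative} at the two extremes. As $N \to 0^+$, the constant first term stays fixed while the negative second term blows up, because its denominator carries the factor $\sqrt{N} \to 0$; hence $E'(N) \to -\infty$ and $E'$ is negative for small $N$. As $N \to \infty$, the second term decays (its denominator grows like $N^2$), so $E'(N) \to \frac{\sigma^2 \gamma_p}{B\alpha\beta M} > 0$. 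Continuity together with strict monotonicity and this sign change gives, via the intermediate value theorem, a unique root $N^\star \in (0,\infty)$; strict convexity then promotes this critical point to the unique global minimizer of $E$.

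Since $E'$ is continuous, strictly increasing, negative for small $N$, and positive for large $N$, a bisection search is immediately justified: one brackets $N^\star$ by any pair with $E'$ negative at the lower endpoint and positive at the upper endpoint, and the sign of $E'$ at each midpoint determines which half retains the root. This settles the first claim of the lemma.

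For the integer-restricted problem I would again invoke convexity. Because $E$ is strictly decreasing on $(0,N^\star]$ and strictly increasing on $[N^\star,\infty)$, any feasible value below the largest feasible $N_- \le N^\star$ yields a strictly larger $E$ than $N_-$, and any feasible value above the smallest feasible $N_+ \ge N^\star$ yields a strictly larger $E$ than $N_+$. Here ``feasible'' means a divisor of $M$ in $\{1,\ldots,M\}$, but the argument only uses that the feasible set is discrete and ordered. Consequently, no feasible point other than $N_-$ or $N_+$ can be optimal, and the optimal integer solution is whichever of $N_-$ and $N_+$ gives the smaller $E$, exactly as stated.

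The main obstacle is the boundary analysis of $E'$: one must confirm the blow-up at $0^+$ and the decay at $\infty$ rigorously by matching the powers of $N$ in the numerator and denominator of the second term, taking care of the degenerate regime $\rho = 0$, where the factor $(\sqrt{\rho} + \sqrt{\alpha\beta M N})$ changes the order of vanishing. In every case the signs work out as needed, so the remaining convexity-to-discrete-minimizer step is routine.
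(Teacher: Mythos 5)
Your proof is correct and follows essentially the same route as the paper: both arguments establish a unique positive root of $E'(N)=0$ by combining monotonicity with the boundary behavior of the derivative (blow-up of the negative term as $N\to 0^+$, decay to the positive constant as $N\to\infty$), justify bisection from this sign structure, and then invoke convexity of $E(N)$ to reduce the integer-constrained problem to comparing the two nearest feasible neighbors of the real-valued root. Your version is marginally more careful (explicitly deriving the strict monotonicity of $E'$ from $E''>0$ and flagging the $\rho=0$ case), but the substance is identical.
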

\begin{proof}
The real-valued $N$ that minimizes $E(N)$ can be found by equating the derivative of $E(N)$ to zero, i.e., $E'(N)=0$. This corresponds to equating the first term of $E'(N)$ in \eqref{eq:first-derivative} to the second term, which is decreasing with $N$. Moreover, the third term approaches infinity as $N \to 0$ and approaches zero as $N \to \infty$. Hence, there is a unique $N>0$ that satisfies $E'(N)=0$ which is guaranteed to be found by a bisection search. %ov
Moreover, since $E(N)$ is a convex function, the optimal $N$ that is both integer and divides $M$ can be found by checking the objective function $E(N)$ for the closest smaller and larger $N$ values that are feasible.
Note that these claims are valid also for the exact expression of $E(N)$. 
\end{proof}

As a general rule, once the UE transmits data, the configured RIS improves the SNR in the data transmission but the relative improvement depends on the strength of the different channel paths.
%A 

In Fig.~\ref{fig:E_weak} we again consider a scenario in which the UE has a strong direct path and one where the direct path is weak. The cascaded path between the BS and the UE via the RIS is the same in both cases. We analyze $E(N)$ for varying payload sizes as well. Fig.~\ref{fig:E_weak} shows that the optimal number of subarrays $N^{\star}$ that minimizes the energy consumption $E(N)$ increases as the number of data symbols $L$ increases. $N^{\star}$ is also increased when the direct channel gain $\rho$ becomes smaller.

%\su

The minimum to $E(N)$ is found either at $N=1$, $N=M$, or an interior point $1 < N < M$. The existence of an interior point solution can be determined using the first derivative in \eqref{eq:first-derivative}. There are three distinct cases for a convex function:

\begin{enumerate}
\item If $E'(1)<0$ and $E'(M)>0,$ the viable optimum is an interior point. This point can be quickly found utilizing the bisection search method \cite{burden2015numerical}.
\item If $E'(1)<0$ and $E'(M)<0$. The optimal solution is individually controlled RIS elements, i.e., $N=M$.
\item If $E'(1)>0$ and $E'(M)>0$. The optimal solution is to bundle all RIS elements into a single subarray, i.e., $N=1$.
\end{enumerate}

We can further analyze when these cases occur.
The condition $E'(1)<0$ for the lower bound is equivalent to
\begin{equation}
    \sqrt{\frac{\rho}{\alpha \beta M}}< \left( \left(\frac{4\gamma_d L }{\pi \gamma_p}\right)^{1/3}-1
    \right).
    \label{eq:ineq1}
\end{equation}

The (exact) energy consumption $E(N)$ %fr
is seen in Fig.~\ref{fig:E_weak}. For a strong direct path $\rho=-90$\,dB and small payload of $L=200$ symbols, the inequality (\ref{eq:ineq1}) does not hold.

\begin{figure}[t!]
\centerline{\includegraphics[trim=8 2 25 20,clip,width=0.99\columnwidth]{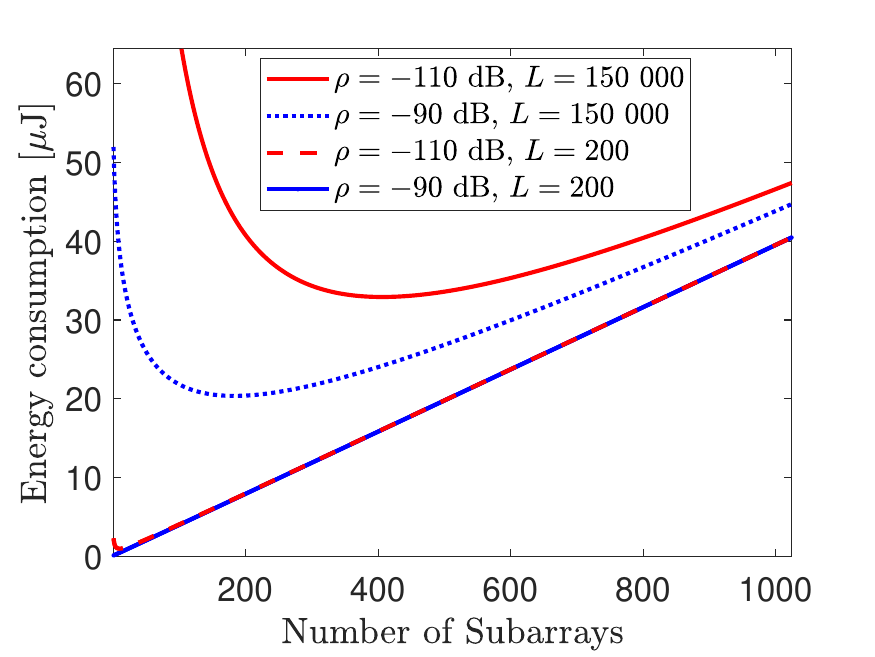}} 
\caption{Energy consumption $E(N)$ when transmitting $L$ symbols with varying gain of the direct channel $\rho$.}  \label{fig:E_weak}
\end{figure}

The condition $E'(M)>0$ for the lower bound can be shown to be equivalent to
\begin{equation}
   \sqrt{\frac{\rho}{\alpha \beta M}}> \left( \left(\frac{4\gamma_d L}{\pi\gamma_p\sqrt{M} }\right)^{1/3}-\sqrt{M}
    \right).
    \label{eq:ineq2}
\end{equation}

The term $\sqrt{\frac{\rho}{\alpha \beta M}}$ represents the average quotient between the direct path and the reflected paths' signal strength. If it is large, then (\ref{eq:ineq1}) will not hold because the direct path will be efficient enough so that the RIS elements will not aid the communication sufficiently to warrant the increased energy cost of transmitting the RIS pilots, leading to employing only a single large subarray as the most energy-efficient choice.

In contrast, if $\sqrt{\frac{\rho}{\alpha \beta M}}$ is very small, the direct path has a negligible contribution to the SNR, even if the RIS is configured randomly. If the inequality in \eqref{eq:ineq2} does not hold, it means that the system is starved for received signal power to the point that the most energy-efficient setup uses all the RIS elements steered individually. Note that the right-hand side of (\ref{eq:ineq2}) can attain negative values and that these inequalities were derived using the derivative in (\ref{eq:first-derivative}) for the lower bound expression. %Th

{\color{black}
\begin{remark}[Wideband Scenarios]
 In theory, when considering channel models for wideband scenarios, one has to build new optimization algorithms to optimize the data and pilot power dynamically across the different subbands along with the phase-shifts and subarray size of the RIS. In practice, however, the size of the subbands utilized would also depend on the payload length. For smaller payloads, like the ones considered in this paper, the narrowband approximation is an accurate representation. As an example, in current 5G systems narrowband channel reporting is used if the number of physical resource blocks (PRBs) is below $24$ \cite{3gpp}. This corresponds to up to $3864$ symbols per time slot and a transmission could take place across multiple time slots. Since the channel statistics are the same in all subbands, the subarray size that is optimal for one subband is likely jointly optimal for all subbands.
\end{remark}
}

\section{Conclusion}
Our study showed that the energy spent on transmitting pilot signals is an important factor when minimizing the energy consumption for transmitting a small-sized data payload over an RIS-assisted narrowband channel. However, rather than turning off RIS elements to reduce the dimensionality, we have shown that it is preferable to group the RIS elements into subarrays with a common reflection coefficient to maintain the aperture gain of the RIS (i.e., how much signal energy it collects), even if the beamforming gain is reduced by the grouping.
By optimizing the size of the subarrays, we can greatly reduce the energy consumption compared to the baseline where each RIS element is configured individually. %The optimal subarray configuration depends on the relative strength of the channel to the RIS compared to the direct path and also on the payload size.
In cases with a small payload and a relatively weak path to the RIS, it is beneficial to have many elements per subarray to reduce the energy spent on transmitting pilots, while the opposite was true in cases with large payloads and strong paths to the RIS.% Furthermore, the power that was allocated to the pilot sequence also affected both the energy consumption of the UE as well as the required power for the data signal.
If the RIS switches between aiding different UEs, the optimal subarray size will also change between UEs depending on their channel conditions and payload sizes.

Imperfect CSI due to low pilot power leads to errors in the phase-shift configurations, which in turn requires an increased data signal power to reach the SNR target. The effects of imperfect CSI are especially prevalent in setups involving a low number of subarrays while it has less impact on the case where the phase-shifts are optimized per element. The subarray size and pilot power are coupled together and depending on the geometry and size of the payload being transmitted, the situation calls for different configurations. As a general rule, in situations where the RIS was more useful (e.g., if the direct path between the UE and BS was weak or the payload was large) %and in situations with a low number of subarrays, 
a high pilot power was beneficial. Another conclusion is that imperfect CSI heavily influences the energy that is consumed in total and analyses that are assuming perfect CSI (as our preliminary work \cite{enqvist2022optimizing}) are applicable mostly when the payload is much larger than the pilot. However, due to the large number of elements utilized in RISs, this may not always be the case in practice. Hence, attaining perfect CSI implicitly assumes overspending on pilot power, which can be a burden on energy consumption. {\color{black}Phase-shift quantization effects, which degrade the RIS performance, are more prevalent in situations with a weak gain between the UE and BS.}

The main takeaway message is that realistic payloads are often quite small and an RIS should be managed differently in practical scenarios, compared to what can be inferred from the capacity analysis in previous works, where the pilot overhead is negligible due to the infinite-sized payload. Since the subarray approach can provide decisive energy consumption reductions, it will be interesting to apply it to more advanced setups that incorporate multiple antennas at the BS and UE, as well as multiple RISs. {\color{black}Additionally, minimizing metrics involving energy consumption at the RIS and BS could also be considered.}

\bibliographystyle{IEEEtran}

\bibliography{Referenser}

%%% Biography
\begin{IEEEbiography}[{\includegraphics[width=1in,height=1.25in,clip,keepaspectratio]{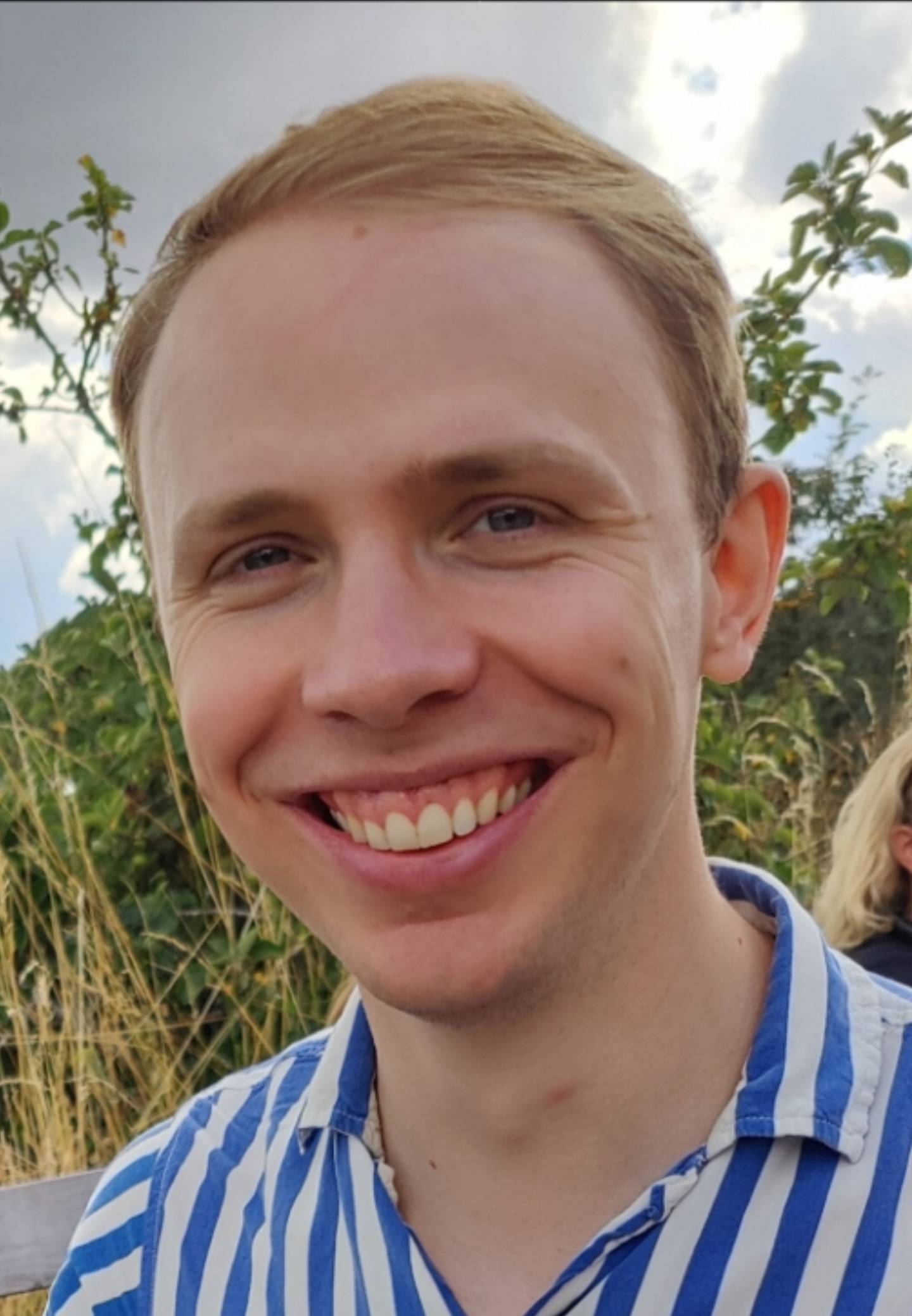}}]
    {Anders Enqvist} received the M.Sc. degree in electrical engineering from KTH Royal Institute of Technology, Stockholm, Sweden in 2021. Currently, he is a Ph.D. student in the Communication Systems department at the same university. His research focuses on energy efficiency, reconfigurable intelligent surfaces, and massive MIMO in wireless communication systems.

\end{IEEEbiography}
\begin{IEEEbiography}[{\includegraphics[width=1in,height=1.25in,clip,keepaspectratio]{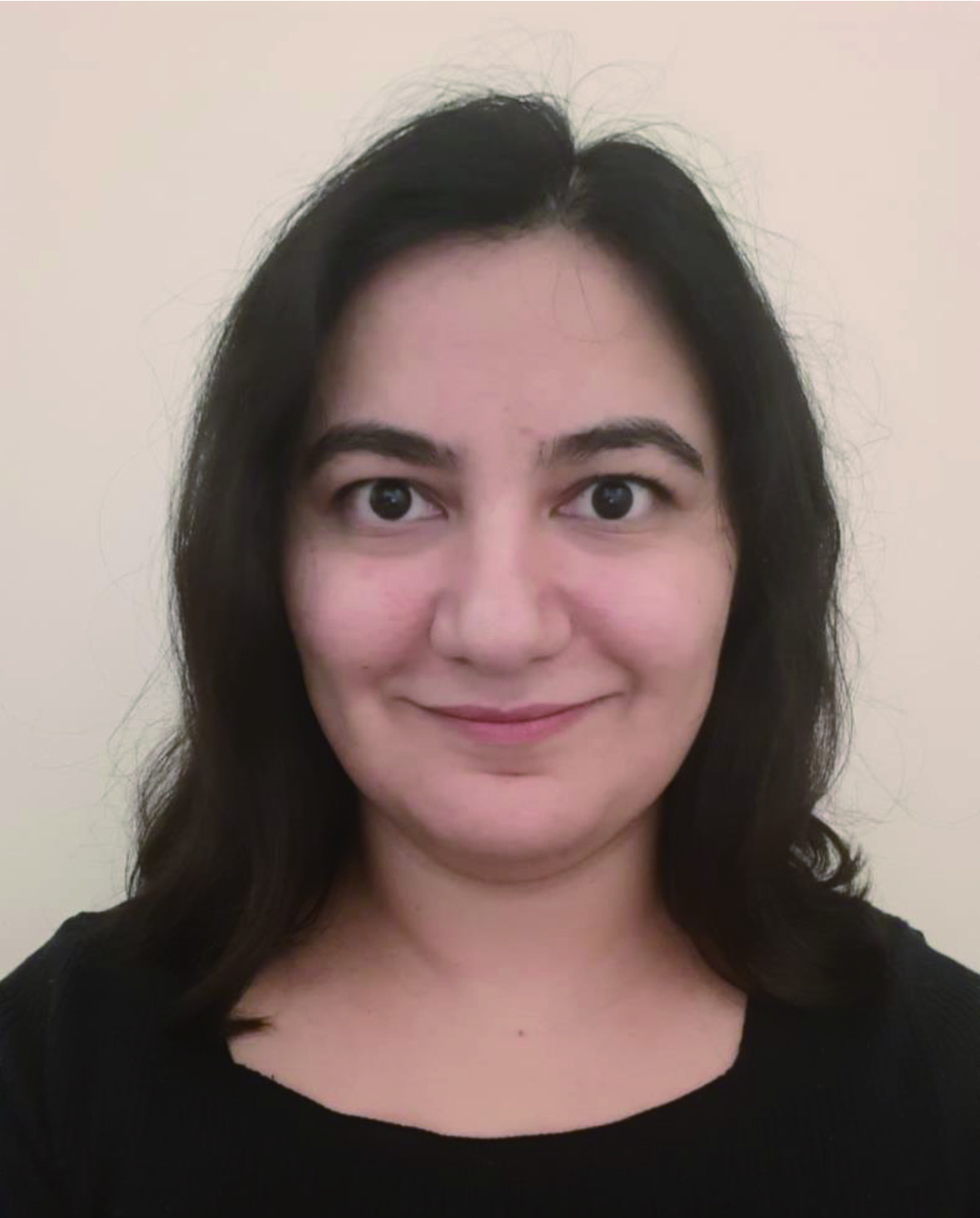}}]
    {{\"O}zlem Tu{\u{g}}fe Demir} received the B.S., M.S., and Ph.D. degrees in Electrical and Electronics Engineering from Middle East Technical University, Ankara, Turkey, in 2012, 2014, and 2018, respectively. She was a Postdoctoral Researcher at Link\"oping University, Sweden in 2019-2020 and at KTH Royal Institute of Technology, Sweden in 2021-2022. She is currently an Assistant Professor with the Department of Electrical and Electronics Engineering, TOBB University of Economics and Technology, Ankara, Turkey. She has authored the textbook \emph{Foundations of User-Centric Cell-Free Massive MIMO} (2021). Her research interests focus on signal processing and optimization in wireless communications, massive MIMO, cell-free massive MIMO, beyond 5G multiple antenna technologies, reconfigurable intelligent surfaces, machine learning for communications, mobile data analysis, and green mobile networks.
\end{IEEEbiography}

\begin{IEEEbiography}[{\includegraphics[width=1in,height=1.25in,clip,keepaspectratio]{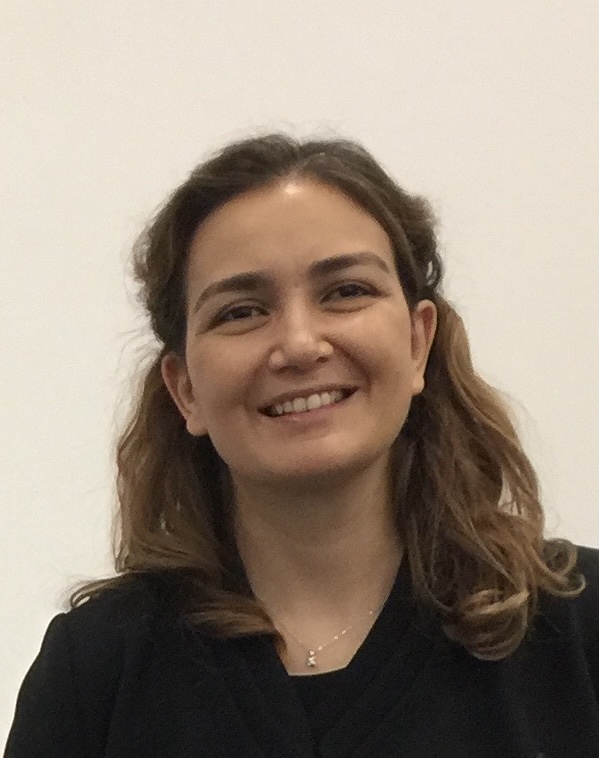}}]
    {Cicek Cavdar} is an Associate Professor, leading a research group on Intelligent Network Systems at the School of EECS at KTH Royal Institute of Technology in Sweden. She finished her Ph.D studies in Computer Science, University of California, Davis and in Istanbul Technical University, Turkey in 2009. After her PhD, she worked as an Assistant Professor in Computer Engineering Department, Istanbul Technical University. She served as symposium chair of IEEE ICC 2017 (GCSN),  held in Paris. She has chaired several workshops on green mobile broadband technologies and Green Mobile Networks last ten years co-located with IEEE ICC and Globecom.  She had leading roles in several European projects in the areas of aerial communications and green networks such as EU EIT Digital projects "5GrEEn: Towards Green 5G Mobile Networks" and "Seamless Direct Air to Ground Communications (DA2GC) in Europe- ICARO-EU". 2015-2018, she served as the leader of Swedish cluster for the EU Celtic Plus project SooGREEN "Service Oriented Optimization of Green Mobile Networks“, and from 2018 she has the same role in the EU Celtic Plus project AI4Green “Artificial Intelligence for Green Mobile Networks”. Her research interests include design and analysis of telecommunication networks with focus on beyond 5G mobile networks,  edge/cloud computing, big data in the network, URLLC, energy efficiency, and AI assisted mobile networks. She is the technical coordinator of EU Celtic-Next 6G-SKY - 6G for Connected Sky project which has started recently. 
\end{IEEEbiography}

\begin{IEEEbiography}[{\includegraphics[width=1in,height=1.25in,clip,keepaspectratio]{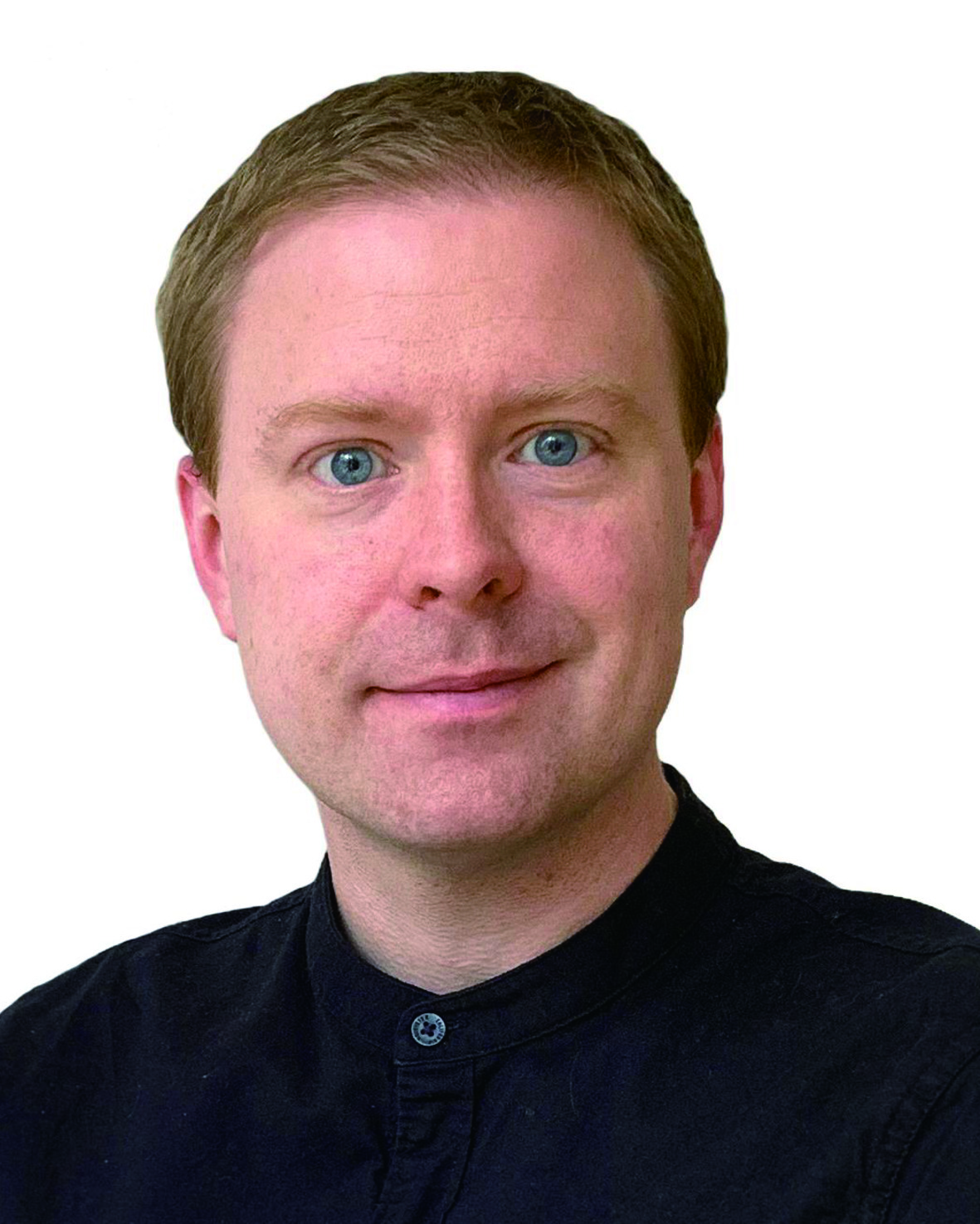}}]
    {Emil Bj{\"o}rnson} (S'07-M'12-SM'17-F'22) is a Full Professor of Wireless Communication at the KTH Royal Institute of Technology, Sweden. He received an M.S. degree in engineering mathematics from Lund University, Sweden, in 2007, and a Ph.D. degree in telecommunications from KTH in 2011. From 2012 to 2014, he was a post-doc at the Alcatel-Lucent Chair on Flexible Radio, SUPELEC, France. From 2014 to 2021, he held different professor positions at Link\"oping University, Sweden. He was a Visiting Full Professor at KTH in 2020-2021, before obtaining a tenured position in 2022.

    He has authored the textbooks \emph{Optimal Resource Allocation in Coordinated Multi-Cell Systems} (2013), \emph{Massive MIMO Networks: Spectral, Energy, and Hardware Efficiency} (2017), and \emph{Foundations of User-Centric Cell-Free Massive MIMO} (2021). He is dedicated to reproducible research and has made a large amount of simulation code publicly available. He performs research on MIMO communications, radio resource allocation, machine learning for communications, and energy efficiency. He is an Area Editor in IEEE Signal Processing Magazine.

    He has performed MIMO research for 16 years, his papers have received more than 25000 citations, and he has filed more than twenty patent applications. He is a host of the podcast Wireless Future and has a popular YouTube channel with the same name. He is an IEEE Fellow, a Wallenberg Academy Fellow, a Digital Futures Fellow, and an SSF Future Research Leader. He has received the 2014 Outstanding Young Researcher Award from IEEE ComSoc EMEA, the 2015 Ingvar Carlsson Award, the 2016 Best Ph.D. Award from EURASIP, the 2018 and 2022 IEEE Marconi Prize Paper Awards in Wireless Communications, the 2019 EURASIP Early Career Award, the 2019 IEEE Communications Society Fred W. Ellersick Prize, the 2019 IEEE Signal Processing Magazine Best Column Award, the 2020 Pierre-Simon Laplace Early Career Technical Achievement Award, the 2020 CTTC Early Achievement Award, the 2021 IEEE ComSoc RCC Early Achievement Award, and the 2023 IEEE ComSoc Outstanding Paper Award. He also co-authored papers that received Best Paper Awards at the conferences, including WCSP 2009, the IEEE CAMSAP 2011, the IEEE SAM 2014, the IEEE WCNC 2014, the IEEE ICC 2015, and WCSP 2017.
\end{IEEEbiography}

\end{document}